\newcommand{\longversion}[1]{}
\newcommand{\shortversion}[1]{#1}
\author{Palash Dey}
\affiliation{%
  \institution{Indian Institute of Technology}
  \city{Kharagpur}
}
\email{palash.dey@cse.iitkgp.ac.in}
\author{Swaprava Nath}
\affiliation{%
  \institution{Indian Institute of Technology}
  \city{Kanpur}
}
\email{swaprava@cse.iitk.ac.in}
\author{Garima Shakya}
\affiliation{%
  \institution{Indian Institute of Technology}
  \city{Kanpur}
}
\email{garima@cse.iitk.ac.in}
\algrenewcommand\algorithmicrequire{\textbf{Input:}}
\algrenewcommand\algorithmicensure{\textbf{Output:}}
\algnewcommand{\Initialize}[1]{%
  \State \textbf{Initialize:}
  \Statex \hspace*{\algorithmicindent}\parbox[t]{.8\linewidth}{\raggedright #1}
}
\renewcommand{\Return}{\State \textbf{return} }
\newcommand{\randomtest}[4][\DD]{(#2,\xspace #3,\xspace #4,\xspace #1) -- {\sc Random Outliers vs Random Profile Test}}
\newcommand{\worstcasetest}[4][\DD]{(#2,\xspace #3,\xspace #4,\xspace #1) -- {\sc Arbitrary Outliers vs Random Profile Test}}
\newcommand{\randomtester}[4][\DD]{(#2,\xspace #3,\xspace #4,\xspace #1) -- {\sc Random Outliers vs Random Profile Tester}}
\newcommand{\worstcasetester}[4][\DD]{(#2,\xspace #3,\xspace #4,\xspace #1) -- {\sc Arbitrary Outliers vs Random Profile Tester}}
\newcommand{\worstworstcasetest}[6][\DD]{(#2,\xspace #3,\xspace #4,\xspace #5,\xspace #6,\xspace #1) -- {\sc Arbitrary Outliers vs Arbitrary Profile Test}}
\newcommand{\worstworstcasetester}[6][\DD]{(#2,\xspace #3,\xspace #4,\xspace #5,\xspace #6,\xspace #1) -- {\sc Arbitrary Outliers vs Arbitrary Profile Tester}}
\newcommand{\mathleft}{\@fleqntrue\@mathmargin\parindent}
\newcommand{\mathcenter}{\@fleqnfalse}
\newcommand{\el}{\ensuremath{\ell}\xspace}
\newcommand{\suc}{\ensuremath{\succ}\xspace}
\let\mydelta\delta
\renewcommand{\delta}{\ensuremath{\mydelta}\xspace}
\renewcommand{\leq}{\leqslant}
\renewcommand{\geq}{\geqslant}
\renewcommand{\ge}{\geqslant}
\renewcommand{\le}{\leqslant}
\newcommand{\E}[1]{\mathbb{E}\left[#1\right]}
\newcommand{\pr}{\ensuremath{\prime}\xspace}
\newcommand{\prr}{{\ensuremath{\prime\prime}}\xspace}
\newcommand{\NPH}{\ensuremath{\mathsf{NP}}-hard\xspace}
\newcommand{\NPC}{\ensuremath{\mathsf{NP}}-complete\xspace}
\newcommand{\NB}{\ensuremath{\mathbb N}\xspace}
\newcommand{\EB}{\ensuremath{\mathbb E}\xspace}
\renewcommand{\AA}{\ensuremath{\mathcal A}\xspace}
\newcommand{\BB}{\ensuremath{\mathcal B}\xspace}
\newcommand{\DD}{\ensuremath{\mathcal D}\xspace}
\newcommand{\LL}{\ensuremath{\mathcal L}\xspace}
\newcommand{\OO}{\ensuremath{\mathcal O}\xspace}
\newcommand{\PP}{\ensuremath{\mathcal P}\xspace}
\newcommand{\QQ}{\ensuremath{\mathcal Q}\xspace}
\newcommand{\RR}{\ensuremath{\mathcal R}\xspace}
\renewcommand{\SS}{\ensuremath{\mathcal S}\xspace}
\newcommand{\WW}{\ensuremath{\mathcal W}\xspace}
\newcommand{\XX}{\ensuremath{\mathcal X}\xspace}
\newcommand{\ccc}{\ensuremath{\text{con}}\xspace}
\newcommand{\rrr}{\ensuremath{\text{res}}\xspace}
\newcommand{\nfrac}{\nicefrac}
\newcommand{\eps}{\ensuremath{\varepsilon}\xspace}
\renewcommand{\epsilon}{\eps}
\newtheorem{probdef}{\bf Problem}
\crefname{example}{Example}{Examples}
\crefname{theorem}{Theorem}{Theorems}
\crefname{observation}{Observation}{Observations}
\crefname{lemma}{Lemma}{Lemma}
\crefname{corollary}{Corollary}{Corollaries}
\crefname{proposition}{Proposition}{Propositions}
\crefname{definition}{Definition}{Definitions}
\crefname{probdef}{Problem}{Problems}
\crefname{claim}{Claim}{Claims}
\crefname{reductionrule}{Reduction rule}{Reduction rules}
\crefname{ineq}{inequality}{Inequalities}
\title{Testing Preferential Domains Using Sampling}
\title{{\bf Testing Preferential Domains using Sampling}}
\begin{document}

\longversion{
\maketitle
\begin{abstract}
A preferential domain is a collection of sets of preferences which are linear orders over a set of alternatives. These domains have been studied extensively in social choice theory due to both its practical importance and theoretical elegance. Examples of some extensively studied preferential domains include single peaked, single crossing, Euclidean, etc. In this paper, we study the sample complexity of testing whether a given preference profile is close to some specific domain.
\longversion{ We consider various notion of closeness, for example, deletion of alternatives, deletion of preferences, and simultaneous deletion of alternatives as well as preferences.}\shortversion{ We consider two notions of closeness: (a) closeness via preferences, and (b) closeness via alternatives.} We further explore the effect of assuming that the {\em outlier} preferences/alternatives to be random (instead of arbitrary) on the sample complexity of the testing problem. In most cases, we show that the above testing problem can be solved with high probability for all commonly used domains by observing only a small number of samples (independent of the number of preferences, $n$, and often the number of alternatives, $m$). In the remaining few cases, we prove either impossibility results or $\Omega(n)$ lower bound on the sample complexity. We complement our theoretical findings with extensive simulations to figure out the actual constant factors of our asymptotic sample complexity bounds.
\end{abstract}

}

\shortversion{

\keywords{Computational social choice; preferential domain; sampling; algorithms}

\maketitle
}


\section{Introduction}

\begin{table*}[htb]
 \begin{center}
 \renewcommand{\arraystretch}{1.3}
  \begin{tabular}{|c|c|c|}\hline
  \multicolumn{2}{|c|}{Input profile} & \multirow{2}{*}{Sample complexity}\\\cline{1-2}
   Possibility $1$ & Possibility $2$ & \\\hline\hline
   
   
   \makecell{$\eps_v n$ random preferences away} & \multirow{\longversion{8}\shortversion{4}}{*}{random}  & \makecell{$\OO(\tfrac{1}{(1-\eps_v)^2}\log\tfrac{1}{\delta})$ [\Cref{thm:spc_n_test}]} \\\cline{1-1}\cline{3-3}
   
   \makecell{$\eps_v n$ arbitrary preferences away} &  & \makecell{$\OO(\tfrac{1}{(1-3\eps_v)^2}\ln\tfrac{1}{\delta})$ for $\eps_v<\nfrac{1}{3}^\star$ [\Cref{thm:spc_n_test_worst}]} \\\cline{1-1}\cline{3-3}
   
   $\eps_a m$ alternatives away &  & \makecell{$\OO(\log \tfrac{\log_{\nfrac{1}{\eps_a}}\nfrac{1}{\delta}}{\delta} \log_{\nfrac{1}{\eps_a}}\tfrac{1}{\delta}\log\log_{\nfrac{1}{\eps_a}}\nfrac{1}{\delta})$ [\Cref{thm:spc_m_test}]} \\\longversion{\cline{1-1}\cline{3-3}}\shortversion{\hline}
   
   \longversion{
   \makecell{$\eps_v n$ random preferences and $\eps_a m$ alternatives away} &  & \makecell{$\OO(\tfrac{1}{(1-\eps_v)^2}\log \tfrac{\log_{\nfrac{1}{\eps_a}}\nfrac{1}{\delta}}{\delta} \log_{\nfrac{1}{\eps_a}}\tfrac{1}{\delta}\log\log_{\nfrac{1}{\eps_a}}\nfrac{1}{\delta})$[\Cref{thm:spc_mn_test}]} \\\cline{1-1}\cline{3-3}
   
   \makecell{$\eps_v n$ arbitrary preferences\\ and $\eps_a m$ alternatives away} &  & \makecell{$\OO(\tfrac{1}{(1-3\eps_v)^2}\log \tfrac{\log_{\nfrac{1}{\eps_a}}\nfrac{1}{\delta}}{\delta} \log_{\nfrac{1}{\eps_a}}\tfrac{1}{\delta}\log\log_{\nfrac{1}{\eps_a}}\nfrac{1}{\delta})$\\ for $\eps_v<\nfrac{1}{3}^\star$ [\Cref{cor:spc_mn_test_worst}]} \\\hline}
   
   \makecell{$\eps_v n$ arbitrary preferences away} & \makecell{$\eps_v^\pr n$ arbitrary\\ preferences away} & \makecell{$\OO(\tfrac{1}{(\eps_v^\pr-\eps_v)^2}(2^m m^2\log^2 m + \log \nfrac{1}{\delta}))$ [\Cref{cor:spc_n_test_worst_worst}]} \\\hline
   
   \longversion{
   $\eps_a m$ alternatives away & $\eps_a^\pr m$ alternatives away & \multirow{2}{*}{\makecell{$\Omega(n\log \nfrac{1}{\delta})$ even for $\eps_a=0$ and\\ for every $0<\eps_a^\pr\le 1$ and $0<\delta<\nfrac{1}{2}$ [\Cref{thm:sp_m_test_worst_worst}]}} \\\cline{1-2}
   
   \makecell{$\eps_v n$ arbitrary preferences\\ and $\eps_a m$ alternatives away} & \makecell{$\eps_v^\pr n$ arbitrary preferences\\ and $\eps_a^\pr m$ alternatives away} &\\\hline
}
   
   \shortversion{
   $\eps_a m$ alternatives away & $\eps_a^\pr m$ alternatives away & \makecell{$\Omega(n\log \nfrac{1}{\delta})$ even for $\eps_a=0$ and\\ for every $0<\eps_a^\pr\le 1$ and $0<\delta<\nfrac{1}{2}$ [\Cref{thm:sp_m_test_worst_worst}]} \\\hline
   }
  \end{tabular}
 \end{center}
  \caption{Summary of results for distinguishing profiles in the first column from the profiles in the second column; all the distances are from the single peaked domain. \longversion{By saying that a profile is $\eps n$ random preferences away, we mean that we need to delete $\eps n$ preferences to make the resulting profile single peaked; moreover the deleted preferences are distributed uniformly at random in the set of all preferences with replacement. Similarly, we say a profile to be $\eps n$ arbitrary preferences away if we need to delete $\eps n$ preferences to make the resulting profile single peaked (with no assumption about them). We say a profile is $\eps_v n$ arbitrary preferences and $\eps_a m$ alternatives away if we need to simultaneously delete $\eps_v n$ preferences and $\eps_a m$ alternatives to make the resulting profile single peaked. We assume $0\le\eps_v < \eps_v^\pr\le1$ and $0\le\eps_a<\eps_a^\pr\le1$. }$\star:$ For any $0\le\eps_v<1$, we refer to \Cref{cor:gen_n_test_worst_any_eps}\longversion{ and \Cref{cor:gen_mn_test_worst}}. Refer to \Cref{subsec:model} for our sampling model.}\label{tbl:summary}
\end{table*}

Learning users' preferences is useful in the contexts of social choice, recommender systems, product development, and many more applications. It is often observed that preferences are never completely arbitrary, rather they possess correlated structures~\cite{gaertner2001domain}. For example, preferences of citizens for a facility location have a {\em single peaked} structure~\cite[Section 1]{FilosRatsikas}, i.e., a citizen has highest preference for the facility at her location and it monotonically decreases with the distance from her. This kind of preferences are also prevalent in political opinions based on the voters' bias to the conservative or liberal views~\cite{hinich1997analytical}. Intuitively, in a single peaked preference profile, we assume that there exists a societal axis where the alternatives have been ordered and every preference ``respects'' that ordering in the following sense. Every preference has an implicit most preferred point $t$ on the societal axis and if an alternative $x$ lies between $t$ and another alternative $y$, then $x$ is preferred over $y$. The advantage of preferences with such structures is that they can efficiently bypass the classic impossibility results of social choice theory~\cite{arrow1950difficulty,gibbard1973manipulation,satterthwaite1975strategy}. 

Similarly, in the design of recommender systems, it has often been observed that users' preferences (and hence their recommendations) have patterns that are (a) demography-based, (b) knowledge-based, (c) feature-based, or (d) content based~\cite{PennockHG00}. While designing a product, an enterprise may wish to look for structures in the end users' preferences, and design their product such that a collectively `efficient' choice is made to cater a large number of users.

While it is difficult to predict the users' preferences apriori, data on the preferences, obtained through users' purchase and browsing patterns, or through surveys, are plentiful which are classified into demography, knowledge, affinity towards a feature or content. It remains to discover whether the preferences come from a specific class that we call {\em preferential domains} or simply, {\em domains}. 

A domain is a collection of sets of preferences over a set of alternatives. A preference profile, i.e., the tuple of preferences of all the agents/users, is said to belong to a domain if, for some set in the domain, every preference in the profile belongs to that set. 
\begin{example}[Single peaked domain]
 Consider three alternatives $a,b,c$. The {\em single peaked domain} with these alternatives is denoted by $\DD = \{\BB_1,\ldots, \BB_6\}$, where $\BB_1=\{(abc),(bac),(bca),(cba)\}$ when the societal order over the alternatives is $a \prec b \prec c$, and similarly, $\BB_2, \ldots, \BB_6$ are the sets of preferences over the same alternatives for different societal orders of $a,b$, and $c$.
\end{example}
Some prominent examples of domains are {\em single peaked, single crossing, Euclidean,}~\cite{gaertner2001domain} etc. The benefit of the discovery of such domains (even as a partial population) is that a much refined plan or protocol can be designed for such domains which satisfy several desirable axioms. For example, the {\em median voting rule} in the single peaked domain ensures that no voter can gain by misreporting her preference~\cite{moulin1991axioms}. 
Another reason to study various domains concerns computational considerations. Indeed, some of the most fundamental problems in computational social choice, for example, computing winners for many important voting rules such as Kemeny\longversion{ \cite{kemeny1959mathematics,levenglick1975fair}}, Dodgson\longversion{ \cite{dodgson1876method,black1958theory}}, and Young\longversion{ \cite{young1977extending}} are computationally intractable \longversion{\cite{bartholdi1989voting,hemaspaandra2005complexity,procaccia2008complexity}}\shortversion{\cite{brandt2016handbook}}. It turns out that most of these problems become efficiently solvable in many domains, single peaked for example~\cite{brandt2015bypassing}.


Our work in this paper contributes to uncovering whether a given preference profile is ``close'' to some domain, through sampling a small number of preferences and/or alternatives. The guarantees we provide are probabilistic that converges to unity as more preferences/alternatives are investigated -- the cost of such an investigation is often proportional to the number of samples drawn, known as {\em sample complexity}. Hence our goal is to minimize the sample complexity of our algorithms. For example, our algorithms could be used to predict whether there exist at least, say 95\%, of the preferences in a profile which are single peaked. If we know the societal order of the single peaked preferences (which constitute at least 95\% of the profile), using median voting rule on the single peaked sub-profile would yield all the desirable properties of the median voting rule, e.g., truthfulness for those 95\% of the population. These kind of truthfulness of a fraction of voters is referred to as ``approximate truthfulness.''
In many applications like public good provisioning, it is highly beneficial to uncover truthful opinions from the vast majority of the population.

To put our work in perspective, we revisit a question that is often asked in computational social choice for any domain. This is about the existence of an efficient recognition algorithm: given a profile \PP, does there exist a polynomial time algorithm to decide whether \PP belongs to the domain? There exist efficient recognition algorithms for many popular domains, for example, single peaked~\cite{bartholdi1986stable}, single crossing~\cite{DoignonF94}, etc.~
\cite{knoblauch2010recognizing,ElkindF14,ElkindFLO15,MagieraF17}. One notable exception is the Euclidean domain of dimension\longversion{ at least} two where the recognition problem is \NPH~\cite{Peters17}.

There are two main limitations of the recognition problem. First, the problem formulation is ``exact.'' Real world profiles are almost never perfect and thus they can only be at most ``close'' to some domain. More specifically, there may be few preferences \longversion{and/or}\shortversion{or} alternatives (treated as outliers) whom we need to ignore to obtain the required structure.  Unfortunately, outliers' consideration often makes the related recognition problem intractable, (e.g., the voter deletion for single peaked domain~\cite{ErdelyiLP17}). Second, the recognition problem needs access to the entire preference profile. In many situations, e.g., pre-election polls, surveys, etc., we only have access to samples. In other cases, the number of preferences may be too large and, depending on the application at hand, a sub-linear time (possibly approximation) algorithm may be more useful. We address both these issues by defining a related testing problem. As a concrete use case, a social planner could use our testing algorithms to know whether it is possible to remove, say 5\% of the preferences to obtain a single peaked structure by observing a small number of samples.

A corresponding computational problem is: can a profile of $n$ preferences over $m$ alternatives belong to some domain \DD after deleting, say at most $k$ preferences (or alternatives), by drawing a small number of samples? However, any algorithm for this problem would\longversion{ provably} need to observe $\Omega(n)$ samples which defeats the main purpose of testing (except when \DD is empty or \DD contains all possible profiles). To see this, let us consider a specific case of \DD to be single peaked; the set of alternatives be $\{a,b,c\}$. Let \PP be a profile consisting of $\nfrac{n}{2}$ (say $n$ is an even integer) copies of $a\suc b\suc c$, $(\nfrac{n}{2})-1$ copies of $a\suc c\suc b$, and one $c\suc b\suc a$. We observe that \PP is not single peaked after observing the last preference $c\suc b\suc a$. However, deletion of that preference makes it single peaked. Let us now consider another profile \QQ consisting of $\nfrac{n}{2}$ copies of $a\suc b\suc c$, $(\nfrac{n}{2})-2$ copies of $a\suc c\suc b$, and two copies of $c\suc b\suc a$. Again, \QQ is not single peaked, but deletion of the two copies of $c\suc b\suc a$ makes it single peaked. We now observe that the KL-divergence~\cite{kullback1951information} between the two distributions of samples for \PP and \QQ is $\OO(\nfrac{1}{n})$ and thus distinguishing \PP from \QQ (which any testing algorithm has to do) requires $\Omega(n)$ samples to succeed with any constant nonzero probability~\cite{BarYossef03}. To overcome this lower bound, we introduce (as is ubiquitous in testing literature~\cite{ron2001property,goldreich1999combinatorial}) a ``gap'' in the two possible inputs. In all our testing problems, we are given a profile\longversion{ of $n$ preferences over $m$ alternatives} as input which is guaranteed to be one of the two possible types, and we need to find which one it is. The two possibilities for the input will cover all the cases except few and thus there is a ``gap.''

\subsection{Our Contribution}

Our specific contribution in this paper are as follows. The error probability of any algorithm below is at most $\delta\in(0,1)$.
\begin{enumerate}[(i),leftmargin=1pt,itemindent=*]
 \item We present a sampling based algorithm to distinguish any profile for which there exists a set \RR of at most $\eps_v n$ preferences (or $\eps_a m$ alternatives) whose deletion makes the resulting profile belong to \DD from any random profile (refer to the first three rows in \Cref{tbl:summary}). We observe that the sample complexity depends on whether we assume \RR to be arbitrary or random. We remark that, in the testing literature~\cite{GoldreichGR98,andrews1998tests,shao2011testing}, it is popular to assume the noise to be random which is equivalent to assuming the preferences in \RR to be random in our context.
 
 \item For any $0\le\eps_v < \eps_v^\pr\le1$, we present a sampling based algorithm to distinguish any profile for which there exist at most $\eps_v n$ preferences whose deletion makes the resulting profile belong to \DD from any profile where one has to delete at least $\eps_v^\pr n$ preferences to make it belong to \DD (refer to the fourth row in \Cref{tbl:summary}).
 
 \item In the case of alternatives, we prove that any algorithm for distinguishing any profile for which there exist at most $\eps_a m$ alternatives whose deletion makes the resulting profile belong to \DD from any profile where one has to delete at least $\eps_a^\pr m$ alternatives to make it belong to \DD has sample complexity of $\Omega(n\log \nfrac{1}{\delta})$ for every $0\le\eps_a < \eps_a^\pr\le1$ even when $\eps_a=0$ (refer to the fifth row in \Cref{tbl:summary}). This shows that detecting arbitrary {\em outlier alternatives} is much harder than detecting arbitrary {\em outlier preferences} from a sample complexity viewpoint.
\end{enumerate}

We remark that all our results in \Cref{tbl:summary} for the single peaked domain actually extend to any domain as described in \Cref{sec:results}. From a technical point of view, to tackle preferences which are outliers, we define and exploit a notion called {\em content of a domain} which, informally, is the maximum number of distinct preferences that any profile in the domain can contain as a function of the number of alternatives. On the other hand, we blend with it the ideas from the classical coupon collector problem to handle alternatives which are outliers. To develop an algorithm for the case when the outliers can be arbitrary, we prove a key structural result (in \Cref{lem:worst}) for arbitrary domain which may be of independent interest also.

\subsection{Related Work}

The computational problem of recognizing whether a given profile belongs to a domain\longversion{ (called the recognition problem)} has been studied extensively in computational social choice. Trick~\longversion{\cite{bartholdi1986stable}}\shortversion{\shortcite{bartholdi1986stable}} shows that the recognition problem is polynomial time solvable for single peaked profiles. Escoffier et al.~\longversion{\cite{EscoffierLO08}}\shortversion{\shortcite{EscoffierLO08}} improve the efficiency of the recognition algorithm for the single peaked profiles. Elkind et al.~\longversion{\cite{ElkindFS12}}\shortversion{\shortcite{ElkindFS12}} present a polynomial time algorithm for recognizing single crossing profiles. Barber{\`{a}} and Moreno~\longversion{\cite{barbera2011top}}\shortversion{\shortcite{barbera2011top}} discover a property called top monotonicity which simultaneously generalizes both single peakedness and single crossingness. Magiera and Faliszewski~\longversion{\cite{MagieraF17}}\shortversion{\shortcite{MagieraF17}} present polynomial time recognition algorithm for top monotonic profiles. Doignon and Falmagne~\longversion{\cite{DoignonF94}}\shortversion{\shortcite{DoignonF94}} show that the recognition problem for the one dimensional Euclidean domain is polynomial time solvable. Knoblauch~\longversion{\cite{knoblauch2010recognizing}}\shortversion{\shortcite{knoblauch2010recognizing}} and Elkind and Faliszewski~\longversion{\cite{ElkindF14}}\shortversion{\shortcite{ElkindF14}} present alternative algorithms for recognizing one dimensional Euclidean profiles. Peters~\longversion{\cite{Peters17}}\shortversion{\shortcite{Peters17}} shows that recognizing Euclidean profiles of dimension at least two is \NPH.

Lackner~\longversion{\cite{Lackner14}}\shortversion{\shortcite{Lackner14}} shows that the computational problem of finding if it is possible to extend a given incomplete profile to a single peaked profile is \NPC. However, if we restrict ourselves to only weak orders, then the computational problem of recognizing incomplete single peaked profiles is polynomial time solvable~\cite{Fitzsimmons14}. The above problem is polynomial time solvable for single crossing profiles too~\cite{ElkindFLO15}. Erd{\'{e}}lyi~\longversion{\cite{ErdelyiLP17}}\shortversion{\shortcite{ErdelyiLP17}} studies complexity of the computational problem of deciding whether a given profile can be ``made'' single peaked by deleting few preferences or alternatives; Bredereck et al.~\longversion{\cite{BredereckCW16}}\shortversion{\shortcite{BredereckCW16}} study complexity of this problem for single peaked, single-caved, single-crossing, \longversion{value-restricted, best-restricted, worst-restricted, medium-restricted, group-separable,} etc. profiles.
Ballester and Haeringer~\longversion{\cite{ballester2011characterization}}\shortversion{\shortcite{ballester2011characterization}} present characterization of  single peaked profiles through succinct forbidden configurations. Bredereck et al.~\longversion{\cite{bredereck2013characterization}}\shortversion{\shortcite{bredereck2013characterization}} show forbidden configurations for the single crossing profiles. Elkind et al.~\longversion{\cite{ElkindFS14}}\shortversion{\shortcite{ElkindFS14}} present forbidden configurations for profiles which are simultaneously single peaked and single crossing. A related literature studies the likelihood of a random profile being single peaked~\cite{lackner2017likelihood,DBLP:journals/dm/ChenF18,chatterji2016characterization}.


\section{Preliminaries and Problem Formulation}\label{sec:prelim}

For any two positive integers $k$ and \el with $k\le\el$, we denote the set $\{j\in\NB:1\le j\le k\}$ by $[k]$ and the set $\{j\in\NB: k\le j\le\el\}$ by $[k,\el]$. For a set \XX, we denote its power set by $2^\XX$. Let \AA be a finite set of alternatives of cardinality $m$. Preferences are linear orders over \AA. We denote the set of all linear orders over \AA by $\LL(\AA)$. For any positive integer $n$, a tuple $\left(\suc_i\right)_{i\in[n]}$ of $n$ preferences is called a profile. If not mentioned otherwise, we use $m,n,$ and \AA to denote the number of alternatives, the number of preferences in a profile, and the set of alternatives, respectively. For a subset $\XX\subseteq\AA$ and a preference ${\suc}\in\LL(\AA)$, we denote the restriction of \suc to \XX by $\suc(\XX)$. A  preferential domain or simply domain is a collection of subsets of $\cup_{|\AA|>0}\LL(\AA)$. We call a domain \DD nontrivial if $\DD\ne\emptyset$ and $\DD\ne\cup_{|\AA|>0}2^{\LL(\AA)}$. Given a domain \DD and a profile $\PP=\left(\suc_i\right)_{i\in[n]}$ over \AA, we say (with slight abuse of notation) that $\PP\in\DD$ if there exists a $\BB\in\DD$ such that $\PP\in\BB^n$. We call a domain \DD \emph{neutral} if whenever $\left(\suc_i\right)_{i\in[n]}\in\DD$, we have $(\sigma\left(\suc_i\right))_{i\in[n]}\in\DD$ for every permutation $\sigma$ of $[m]$; if $\suc_i$ is defined as $a_1\suc a_2\suc \cdots\suc a_m$, then $\sigma\left(\suc_i\right)$ is defined as $a_{\sigma(1)}\suc a_{\sigma(2)}\suc \cdots\suc a_{\sigma(m)}$. We call a domain \DD \emph{normal} if whenever $\left(\suc_i\right)_{i\in[n]}\in\DD$, we have $(\suc_i(\XX))_{i\in[n]}\in\DD$ for every $\XX\subseteq\AA$. In this work, we consider only neutral and normal domains. We remark that many popular domains including single peaked, single caved, single crossing, top restricted, bottom restricted, etc. satisfy these two properties (the only notable exception is the domain of top monotonic~\cite{barbera2011top} profiles).

Let \DD be any domain and $\PP = \left(\suc_i\right)_{i\in[n]}\in\LL(\AA)^n$ be a profile. If it satisfies the following conditions:
 \begin{enumerate}[label=(\roman*),leftmargin=*]
  \item there exists a subset $J\subset[n]$ such that $|J|=\el$ and $\left(\suc_i\right)_{i\in J}\in\DD$, and
  \item for every subset $K\subset[n]$ such that $|K|>\el$, we have $\left(\suc_i\right)_{i\in K}\notin\DD$,
 \end{enumerate}
then we say that the {\em preference-distance} of \PP from \DD is $(n-\el)$, and we call the preferences which need to be deleted to bring the profile back to \DD to be {\em preference outliers}. Similarly, we can define the notion of {\em alternative-distance} (where only alternatives need to be deleted) and alternative outliers.

Our first problem is to distinguish a profile which is, informally speaking, $\eps_a m$ alternatives and random $\eps_v n$ preferences away from some domain \DD vs a random profile. We call this problem \randomtest{$\eps_v$}{$\eps_a$}{$\delta$} which is formally defined as follows.

\vspace{1ex}
 \noindent\fbox{\begin{minipage}{\linewidth}
\begin{probdef}[\randomtest{$\eps_v$}{$\eps_a$}{$\delta$}]\label{def:rand_rand}
 Let $\left(\suc_i\right)_{i\in[n]}$ be a profile over a set \AA of alternatives which is either one of the following kind:
 \begin{enumerate}[label=(\roman*),leftmargin=*]
  \item There exists $\WW\subseteq[n]$ and $\XX\subseteq\AA$ with $|\WW|\ge (1-\eps_v)n$ and $|\XX|\ge(1-\eps_a)m$ such that the profile $(\suc_i(\XX))_{i\in\WW}$ belongs to the domain \DD and $\suc_j(\XX)$ is distributed uniformly in $\LL(\XX)$ for every $j\in[n]\setminus\WW$.
  \item The preference $\suc_i$ is distributed uniformly randomly in $\LL(\AA)$ for every $i\in[n]$.
 \end{enumerate}
 Output $1$ if the input profile is of the first kind and $0$ if it is of the second kind; the probability of error can be at most $\delta$.\longversion{ The probability is taken over the randomness used in generating the instances and the randomness used by the algorithm.}
\end{probdef}
\end{minipage}}
\vspace{1ex}

\Cref{def:rand_rand} assumes that the preference outliers are distributed uniformly randomly which can be a strong assumption depending on the application at hand. The \worstcasetest{$\eps_v$}{$\eps_a$}{$\delta$} problem in \Cref{def:worst_rand} removes this assumption.

\vspace{1ex}
 \noindent\fbox{\begin{minipage}{\linewidth}
\begin{probdef}[\worstcasetest{$\eps_v$}{$\eps_a$}{$\delta$}]\label{def:worst_rand}
 Let $\left(\suc_i\right)_{i\in[n]}$ be a profile over a set \AA of alternatives which is either one of the following:
 \begin{enumerate}[label=(\roman*),leftmargin=*]
  \item There exists $\WW\subseteq[n]$ and $\XX\subseteq\AA$ with $|\WW|\ge (1-\eps_v)n$ and $|\XX|\ge(1-\eps_a)m$ such that the profile $(\suc_i(\XX))_{i\in\WW}$ belongs to the domain \DD.
  \item The preference $\suc_i$ is distributed uniformly randomly in $\LL(\AA)$ for every $i\in[n]$.
 \end{enumerate}
 Output $1$ if the input profile is of the first kind and $0$ if it is of the second kind; the probability of error can be at most $\delta$.\longversion{ The probability is taken over the randomness used in generating the instances in (ii) and the randomness used by the algorithm.}
\end{probdef}
\end{minipage}}
\vspace{1ex}

\Cref{def:worst_rand} still retains the assumption from \Cref{def:rand_rand} that the second possibility for the input profile is random. The \worstworstcasetest{$\eps_v$}{$\eps_a$}{$\eps_v^\pr$}{$\eps_a^\pr$}{$\delta$} problem in \Cref{def:worst_worst} is the most general problem in our paper which removes all these structural assumptions from \Cref{def:rand_rand,def:worst_rand}.

\vspace{1ex}
 \noindent\fbox{\begin{minipage}{\linewidth}
\begin{probdef}[\worstworstcasetest{$\eps_v$}{$\eps_a$}{$\eps_v^\pr$}{$\eps_a^\pr$}{$\delta$}]\label{def:worst_worst}
 Let $\left(\suc_i\right)_{i\in[n]}$ be a profile over a set \AA of alternatives which is either one of the following kind where $0\le\eps_v<\eps_v^\pr\le1$ and $0\le\eps_a<\eps_a^\pr\le1$:
 \begin{enumerate}[label=(\roman*),leftmargin=*]
  \item There exists $\WW\subseteq[n]$ and $\XX\subseteq\AA$ with $|\WW|\ge (1-\eps_v)n$ and $|\XX|\ge(1-\eps_a)m$ such that the profile $(\suc_i(\XX))_{i\in\WW}$ belongs to the domain \DD.
  \item For every $\WW\subseteq[n]$ and $\XX\subseteq\AA$ with $|\WW|> (1-\eps_v^\pr)n$ and $|\XX|>(1-\eps_a^\pr)m$, the profile $(\suc_i(\XX))_{i\in\WW}$ does not belong to the domain \DD.
 \end{enumerate}
 Output $1$ if the input profile is of the first kind and $0$ if it is of the second kind; the probability of error can be at most $\delta$.\longversion{ The probability is taken over the randomness used by the algorithm.}
\end{probdef}
\end{minipage}}
\vspace{1ex}

\shortversion{In \Cref{def:rand_rand,def:worst_rand,def:worst_worst}, the error probability is taken over the randomness used in generating the instances in (ii) and the randomness used by the algorithm.}

\subsection{Content and Residue of Domain}

We now define the {\em content} and {\em residue} of any domain which will make the many of our results simpler to state. Let \DD be any domain. We define the {\em content} of \DD as a function $\ccc_\DD: \NB\longrightarrow[0,1]$ such that any profile with $m\in\NB$ alternatives in \DD can have at most $\ccc_\DD(m) m!$ distinct preferences; we call the function $\rrr_\DD: \NB\longrightarrow[0,1]$ defined as $\rrr_\DD(m) = 1-\ccc_\DD(m)$ the {\em residue} of a domain. For example, $\ccc_{\text{single peaked}}(2)=1,\ccc_{\text{single peaked}}(3)=\nfrac{2}{3}$, $\ccc_{\text{single crossing}}(m) = \nfrac{\left({m\choose 2}+1\right)}{m!}$~\cite{DeyM16a}. For technical reason, let us assume that $\ccc_\DD(1)=1$ for every \DD. We observe that, for normal domains, the function $\ccc_\DD(\cdot)$ is non-increasing (and thus $\rrr_\DD(\cdot)$ is a non-decreasing function). Whenever the domain \DD is immediate from the context, we omit \DD from subscript of $\ccc$ and \rrr.


\subsection{Sampling Model and Sample Complexity}\label{subsec:model}

In our model, there is an oracle which, when queried, returns an agent $v$ picked uniformly randomly with replacement from the set of all agents. Now the algorithm can ask the agent $v$ an arbitrary number of comparison queries -- in a comparison query, two alternatives $x$ and $y$ are presented to the agent $v$ and it replies whether it prefers $x$ over $y$ or $y$ over $x$. The {\em sample complexity} of an algorithm is defined to be the total number of comparison queries it makes during its execution. We remark that defining sample complexity (instead of the number of agents sampled) as the number of comparison queries enables us to perform more fine grained analysis of the complexity of our problems.

\subsection{Chernoff Bound}

We repeatedly use the following concentration inequality:
\begin{theorem}\label{thm:chernoff}
Let $X_1, \dots, X_\ell$ be a sequence of $\ell$ independent
random variables in $[0,1]$ (not necessarily identical). Let $S = \sum_i X_i$ and
let $\mu = \E{S}$. Then, for any $0 \leq \delta \leq 1$: 
\begin{equation}
 \label{eq:additive}
 \Pr[|S-\mu| \geq \delta \ell] < 2 \exp(-2\ell \delta^2),
\end{equation}
and 
\begin{equation}
 \label{eq:multiplicative}
 \Pr[|S - \mu| \geq \delta \mu] < 2\exp(-\delta^2\mu/3).
\end{equation}
\Cref{eq:additive,eq:multiplicative} are called additive and multiplicative versions of the bound respectively. 
\end{theorem}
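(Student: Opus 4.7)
The plan is to prove both parts of \Cref{thm:chernoff} by the standard exponential-moment (Bernstein trick) method, and obtain the two-sided bound by applying the one-sided argument to $S$ and $-S$ and union-bounding the resulting probabilities. Since the $X_i$ are independent and bounded in $[0,1]$, for any $t>0$ Markov's inequality gives
\begin{equation*}
\Pr[S-\mu \ge \delta\ell] \;\le\; e^{-t(\mu+\delta\ell)}\,\E{e^{tS}} \;=\; e^{-t(\mu+\delta\ell)}\prod_{i=1}^{\ell}\E{e^{tX_i}}.
\end{equation*}
The whole proof then reduces to bounding the moment generating function $\E{e^{tX_i}}$ in two different ways, one for each part of the theorem, and optimizing $t$.

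For the additive bound \eqref{eq:additive}, the key step is Hoeffding's lemma: for any random variable $Y\in[0,1]$ with mean $p$, one has $\E{e^{t(Y-p)}}\le e^{t^2/8}$. This is proved by writing $Y$ as a convex combination of $0$ and $1$, bounding $e^{tY}$ above by the corresponding convex combination of $e^{t\cdot 0}$ and $e^{t\cdot 1}$ (using convexity of the exponential), and then analyzing the resulting function of $t$ via its second-derivative bound (or equivalently a Taylor expansion of the log-MGF). Substituting this into the product above, taking logarithms, and choosing the optimal $t=4\delta$ yields $\Pr[S-\mu\ge\delta\ell]\le e^{-2\ell\delta^2}$. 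Applying the same argument to $-X_i\in[-1,0]$ gives the matching lower-tail bound, and a union bound produces the factor of $2$ in \eqref{eq:additive}.

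For the multiplicative bound \eqref{eq:multiplicative}, I would instead use the inequality $\E{e^{tX_i}}\le 1+p_i(e^t-1)\le \exp\!\bigl(p_i(e^t-1)\bigr)$, valid for $X_i\in[0,1]$ with $p_i=\E{X_i}$ (the first step uses convexity of $e^{tx}$ on $[0,1]$, the second uses $1+x\le e^x$). Multiplying over $i$ and dividing by $e^{t(1+\delta)\mu}$ gives the classical bound $\Pr[S\ge(1+\delta)\mu]\le \bigl(e^{\delta}/(1+\delta)^{1+\delta}\bigr)^{\mu}$. Optimizing $t=\ln(1+\delta)$, and then using the elementary calculus inequality $(1+\delta)\ln(1+\delta)-\delta \ge \delta^2/3$ valid on $0\le\delta\le 1$, yields the upper-tail bound $\exp(-\delta^2\mu/3)$. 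The lower tail $\Pr[S\le(1-\delta)\mu]\le\exp(-\delta^2\mu/2)$ is obtained analogously with $t<0$ and the inequality $(1-\delta)\ln(1-\delta)+\delta\ge\delta^2/2$; weakening $\delta^2/2$ to $\delta^2/3$ and union-bounding produces \eqref{eq:multiplicative}.

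The main obstacles are purely routine: verifying Hoeffding's lemma for \eqref{eq:additive} and checking the two calculus inequalities needed to convert the tight Chernoff bound $(e^{\pm\delta}/(1\pm\delta)^{1\pm\delta})^\mu$ into the clean form $\exp(-\delta^2\mu/3)$ for \eqref{eq:multiplicative}. Since these are entirely standard and the theorem is invoked only as a black box in the sequel, I would simply state \Cref{thm:chernoff} with a reference (e.g., to Mitzenmacher and Upfal or Dubhashi and Panconesi) rather than reproduce the full derivation.
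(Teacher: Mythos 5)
Your sketch is the standard and correct exponential-moment derivation of both the Hoeffding (additive) and Chernoff (multiplicative) forms, and your closing judgment matches the paper exactly: the paper states \Cref{thm:chernoff} as a known black-box concentration inequality with no proof, which is precisely what you propose to do. Nothing further is needed.
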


\section{Results}
\label{sec:results}

We now present our main results. Our general approach would be to explain our algorithms for the special case of the single peaked domain first and then generalize to arbitrary domain; we make an exception for few cases where presenting the general case directly better reveals the key idea.\shortversion{ In the interest of space, we omit some of our proofs, which can be found in the supplemental material. For ease of exposition and interest of space, we have deferred our more involved algorithms for the cases when both preferences and alternatives could simultaneously be outliers to the supplemental material.}

\subsection{Only Preferences as Outliers}

In this subsection, we focus on the case when only preferences are considered as outliers. We begin with presenting our \randomtester{$\eps_v$}{$0$}{$\delta$} for the single peaked domain.\longversion{ We then generalize it to arbitrary domains in \Cref{cor:gen_n_test}.} Our algorithm first fixes any three alternatives, say $a, b,$ and $c$. Then it samples few preferences restricted to these three alternatives only. If all the six possible permutations of $a, b,$ and $c$ appear nearly same number of times, then the algorithm predicts the profile to be a random profile; otherwise it predicts it to be close to single peaked. We now formally present our algorithm in \Cref{thm:spc_n_test}.

\begin{algorithm}[!htbp]
 \caption{\randomtester[single peak]{$\eps_v$}{$0$}{$\delta$}\label{code:pref_outlier}}
  \begin{algorithmic}[1]
    \Require{Oracle access to a profile \PP}
    \Ensure{$1$ if there exists $\eps_v n$ preferences whose deletion makes the resulting profile single peaked and $0$ if \PP has been generated randomly}
    \State{Let $a,b,c\in\AA$ be any $3$ arbitrary alternatives}
    \State{Sample $\el=\tfrac{72}{(1-\eps_v)^2}\ln\tfrac{6}{\delta}$ preferences restricted to $\{a,b,c\}$ uniformly at random from the input profile with replacement. Let $\BB\in\LL(\{a,b,c\})^\el$ be the profile of sampled preferences}
    \State{Let $t$ be the minimum number of times any preference in $\LL(\{a,b,c\})$ appear in \BB}
    \If{$t<\tfrac{\el}{12}(1+\eps_v)$}
        \Return $1$
    \Else
        \Return $0$
    \EndIf
  \end{algorithmic}
\end{algorithm}

\begin{theorem}\label{thm:spc_n_test}
 For at least $3$ alternatives, there exists a \randomtester[single peak]{$\eps_v$}{$0$}{$\delta$} with sample complexity $\OO(\tfrac{1}{(1-\eps_v)^2}\log\tfrac{1}{\delta})$ for every $0\le\eps_v<1$ and $0<\delta<\nfrac{1}{2}$. If there are only $2$ alternatives, then there does not exist any such tester.
\end{theorem}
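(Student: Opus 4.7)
My plan is to verify the given \Cref{code:pref_outlier} for $m\ge 3$ and separately show an impossibility result for $m=2$. For the positive direction, the key structural fact is that single peaked has $\ccc(3)=\nfrac{2}{3}$: restricted to any triple $\{a,b,c\}$, a single peaked profile can only use $4$ of the $6$ orders, since the two orders that place the societal median of $\{a,b,c\}$ at the bottom are forbidden. Call these two orders \emph{forbidden} and fix one of them, $\pi$.

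For Possibility~(i) (close to single peaked), normality of the domain implies that the restriction $(\suc_i(\{a,b,c\}))_{i\in\WW}$ is still single peaked, so $\pi$ can only arise from the at most $\eps_v n$ outliers. Since each outlier's restriction to $\{a,b,c\}$ is uniform on $\LL(\{a,b,c\})$, a single sample equals $\pi$ with probability at most $\eps_v/6$, hence the count $S_\pi$ over $\el$ samples satisfies $\E{S_\pi}\le \eps_v\el/6$. For Possibility~(ii) every order has sampling probability exactly $1/6$, so each count has mean $\el/6$. The threshold $\tfrac{\el}{12}(1+\eps_v)$ in \Cref{code:pref_outlier} is precisely the midpoint between these two means, so both directions are separated from it by a gap of $(1-\eps_v)\el/12$.

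Next I would close each case with the additive version of \Cref{thm:chernoff}. In Possibility~(i), $\Pr[S_\pi\ge\tfrac{\el}{12}(1+\eps_v)]\le\exp(-\el(1-\eps_v)^2/72)\le\delta$ for the chosen $\el$; since $t\le S_\pi$, the minimum $t$ is below threshold, and the algorithm outputs~$1$ with probability $\ge 1-\delta$. In Possibility~(ii), the symmetric lower tail bounds $\Pr[S_\sigma<\tfrac{\el}{12}(1+\eps_v)]$ by $\nfrac{\delta}{6}$ for each order $\sigma$; union-bounding over the six orders shows that every count, and hence $t$, exceeds the threshold with probability $\ge 1-\delta$, so the algorithm outputs~$0$. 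The claimed $\OO(\tfrac{1}{(1-\eps_v)^2}\log\tfrac{1}{\delta})$ sample complexity follows from the value of $\el$.

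For $m=2$, the single peaked domain equals $\LL(\AA)$, so \emph{every} profile is single peaked and is therefore a legitimate Possibility~(i) instance (take $\WW=[n]$, so the outlier fraction is zero). In particular, if the adversary selects the non-outlier core uniformly at random, the resulting Possibility~(i) distribution is identical to the uniform Possibility~(ii) distribution. Suppose for contradiction that a tester $T$ exists, and set $p(\PP):=\Pr[T(\PP)=1]$. The Possibility~(i) guarantee applied to every fixed $\PP$ gives $\mathbb{E}_{\PP\sim\text{unif}}[p(\PP)]\ge 1-\delta$, whereas the Possibility~(ii) guarantee on the same uniform $\PP$ gives $\mathbb{E}_{\PP\sim\text{unif}}[p(\PP)]\le\delta$, contradicting $\delta<\nfrac{1}{2}$. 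The main technical care needed is the bookkeeping of Chernoff constants so that a single choice of $\el$ handles both tails (including the six-fold union bound on the uniform side); the rest is a routine gap-based sampling argument.
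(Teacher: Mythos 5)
Your proposal is correct and follows essentially the same route as the paper's proof: the identical algorithm and threshold, Chernoff bounds (with a union bound over the six orders on the uniform side) exploiting the gap between the means $\eps_v\el/6$ and $\el/6$, and the same indistinguishability argument for $m=2$ since every two-alternative profile is single peaked. Your only additions are cosmetic: you make explicit which restricted orders are forbidden (the paper merely asserts that some order is missing from the single-peaked part) and you use the additive tail on both sides where the paper mixes additive and multiplicative forms.
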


\begin{proof}
 For $m=2$, the result follows from the observation that a profile where every preference is distributed uniformly in the set of all possible preferences is single peaked and thus the two cases are statistically indistinguishable. So let us assume $m\ge 3$ and $a, b,$ and $c$ be any three alternatives. \longversion{We present our algorithm in \Cref{code:pref_outlier}. }We pick $\el=\tfrac{72}{(1-\eps_v)^2}\ln\tfrac{6}{\delta}$ preferences uniformly at random with replacement and query oracle to know how $a, b,$ and $c$ are ordered in these preferences. Let $p_i, i\in[6]$, be all possible permutations of $\{a,b,c\}$ and $X_i$ be the random variable denoting the number of sampled preferences where the permutation $p_i$ appears for $i\in[6]$. We output $1$ if $\min_{i\in[6]} X_i < \tfrac{\el}{12}(1+\eps_v)$ and output $0$ otherwise. We observe that the sample complexity of our algorithm is $6\el=\OO(\tfrac{1}{(1-\eps_v)^2}\ln\tfrac{1}{\delta})$. We now turn to the correctness of our algorithm. For that we show that irrespective of the input profile, the probability of making an error is at most $\delta$.
 
 \begin{itemize}[leftmargin=0cm,itemindent=0.3cm,labelwidth=\itemindent,labelsep=0cm,align=left,noitemsep,topsep=2pt]
 
  \item {\bf Case I - the input profile is single peaked after deleting at most $\eps_v n$ preferences which are distributed uniformly:} Let \PP be the input profile and \QQ be a sub-profile of \PP which is single peaked and contains at least $(1-\eps_v)n$ preferences. Hence, there exists an $\eta\in[6]$ such that the preference $p_\eta$ does not appear in \QQ. Since the preferences in $\PP\setminus\QQ$ is uniformly distributed and $|\PP\setminus\QQ|\le\nfrac{\eps_v n}{6}$, we have $\EB[X_\eta]\le\nfrac{\eps_v\el}{6}$. Using Chernoff bound (additive form), we now have the following:
  \[ \Pr[\text{error}]\le \Pr[X_j \ge \tfrac{\el}{12}(1+\eps_v)] \le \exp\{-\tfrac{\el(1-\eps_v)^2}{72}\} \le \delta \]
  
  \item {\bf Case II - the input profile is distributed uniformly:} Since every preference in profile \PP is uniformly distributed, for every $i\in[6]$, we have $\EB[X_i]=\nfrac{\el}{6}$. Using Chernoff bound (multiplicative form) followed by union bound, we have the following:
  \begin{align*}
   \Pr[\text{error}] &= \Pr[\exists i\in[6], X_i\le \tfrac{\el}{12}(1+\eps_v)]\\
   &\le 6\exp\{-\nfrac{(1-\eps_v)^2 \el}{48}\}\le \delta\qedhere
  \end{align*}
 \end{itemize}
\end{proof}

The main idea in \Cref{thm:spc_n_test} can be easily extended to arbitrary domains.

\begin{corollary}\label{cor:gen_n_test}
 Let \DD be any normal and neutral domain and $m_0=\min\{m: \ccc_\DD(m)<1\}$. For at least $m_0$ alternatives, there exists a \randomtester{$\eps_v$}{$0$}{$\delta$} with sample complexity $\OO(\tfrac{1}{(1-\eps_v)^2}\ln\tfrac{1}{\delta})$ for every $0\le\eps_v<1$ and $0<\delta<\nfrac{1}{2}$. If the number of alternatives is at most $m_0-1$, then there does not exist any such tester.
\end{corollary}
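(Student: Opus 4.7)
The plan is to follow the template of \Cref{thm:spc_n_test} almost verbatim, replacing the fixed choice of three alternatives with an arbitrary choice of $m_0$ alternatives, and using the definition of $m_0$ as the first point where the content drops below $1$ to obtain a ``missing'' preference that plays the role that the one forbidden permutation plays in the single-peaked, three-alternative case. Concretely, fix any alternatives $a_1,\dots,a_{m_0}\in\AA$. Sample $\el=\Theta\bigl(\tfrac{(m_0!)^2}{(1-\eps_v)^2}\ln\tfrac{m_0!}{\delta}\bigr)$ agents, query each on all $\binom{m_0}{2}$ pairs from $\{a_1,\dots,a_{m_0}\}$, and for every $p\in\LL(\{a_1,\dots,a_{m_0}\})$ let $X_p$ count how many sampled restrictions equal $p$. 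Output $1$ if $\min_p X_p<\tfrac{\el}{2\,m_0!}(1+\eps_v)$ and $0$ otherwise. Since $m_0$ is a constant depending only on the domain, the sample complexity is $\OO\bigl(\tfrac{1}{(1-\eps_v)^2}\ln\tfrac{1}{\delta}\bigr)$.

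For correctness in Case~(i), normality of \DD lets us restrict the single-peaked-like sub-profile to $\{a_1,\dots,a_{m_0}\}$, and by the definition of $\ccc_\DD$ and $m_0$ this restriction uses at most $\ccc_\DD(m_0)\,m_0!<m_0!$ distinct orders, so some $p^\star\in\LL(\{a_1,\dots,a_{m_0}\})$ is missing from the in-domain part. Hence $X_{p^\star}$ is fed only by the at most $\eps_v n$ uniformly random outliers, giving $\EB[X_{p^\star}]\le \eps_v\el/m_0!$; the additive Chernoff bound then shows $X_{p^\star}<\tfrac{\el}{2m_0!}(1+\eps_v)$ with probability $\ge 1-\delta$, so the algorithm outputs $1$. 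In Case~(ii), each restriction is uniform on $\LL(\{a_1,\dots,a_{m_0}\})$, so $\EB[X_p]=\el/m_0!$ for every $p$; applying the multiplicative Chernoff bound with the gap $(1-\eps_v)/(2\,m_0!)$ and a union bound over the $m_0!$ orders yields $\min_p X_p\ge\tfrac{\el}{2m_0!}(1+\eps_v)$ with probability $\ge 1-\delta$, so the algorithm outputs $0$. The only nontrivial calibration is matching the threshold $\tfrac{\el}{2m_0!}(1+\eps_v)$ so that it sits strictly between $\eps_v\el/m_0!$ and $\el/m_0!$, and this is a routine Chernoff computation generalizing the $m_0!=6$ case used in the proof of \Cref{thm:spc_n_test}.

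For the impossibility part, suppose $m\le m_0-1$. By definition of $m_0$, $\ccc_\DD(m)=1$, so some profile in \DD on $m$ alternatives contains all $m!$ distinct linear orders; the witnessing $\BB\in\DD$ must then equal $\LL(\AA)$, so every profile on $m$ alternatives already belongs to \DD. Consequently a uniformly random profile satisfies the Case~(i) condition with $\WW=[n]$ and $\XX=\AA$, making the distributions in Case~(i) and Case~(ii) identical; no algorithm can distinguish them with error strictly below $\nfrac{1}{2}$.

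The main obstacle is conceptual rather than technical: one must recognize that $\ccc_\DD(m_0)<1$ by itself furnishes, via normality, a forbidden preference on every $m_0$-element subset of \AA (although the identity of the forbidden order depends on the sub-profile), and that the algorithm needs only to detect a deficit in the minimum count, not to identify which order is missing. Once this is in place the analysis reduces to the same two Chernoff calculations used in \Cref{thm:spc_n_test}.
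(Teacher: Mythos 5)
Your proposal is correct and follows essentially the same route as the paper's own proof: pick an arbitrary set of $m_0$ alternatives, sample $\OO(\tfrac{1}{(1-\eps_v)^2}\ln\tfrac{1}{\delta})$ preferences restricted to it (the constant depending on $m_0$), output $1$ iff some order in $\LL$ of those alternatives is seen at most $\tfrac{\el}{2m_0!}(1+\eps_v)$ times, with the two Chernoff/union-bound cases mirroring \Cref{thm:spc_n_test}, and the impossibility for $m\le m_0-1$ argued exactly as there (content $1$ makes a uniformly random profile already lie in \DD, so the two cases are statistically indistinguishable). You simply spell out the normality/content step and the constants more explicitly than the paper does.
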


\longversion{
\begin{proof}
 The argument for the impossibility result is exactly same as in the proof of \Cref{thm:spc_n_test}. So let us assume that the number of alternatives $m$ is at least $m_0$. We pick a subset $\BB\subset\AA$ of $m_0$ alternatives arbitrarily where \AA is the set of alternatives in the input profile. We now sample $\el=\OO(\tfrac{1}{(1-\eps_v)^2}\ln\tfrac{1}{\delta})$ as in \Cref{thm:spc_n_test} (although the constant hidden in \OO is different from \Cref{thm:spc_n_test} and depends on $m_0$), query oracle to learn these preferences restricted to \BB. We output $1$ if there exists a preference ${\suc}\in\LL(\BB)$ which is observed in the sampled preferences at most $\tfrac{\el}{2(m_0!)}(\eps_v+1)$ and output $0$ otherwise. The sample complexity of our algorithm is $\OO(\tfrac{1}{(1-\eps_v)^2}\ln\tfrac{1}{\delta})$ (since $m_0$ is a constant). The proof of correctness of our algorithm is along the same lines as in the proof of \Cref{thm:spc_n_test}.
\end{proof}
}

We now turn our attention to the \worstcasetest{$\eps_v$}{$0$}{$\delta$} problem; that is when the outliers can be arbitrary (need not be randomly generated). We begin with presenting a general impossibility result in this case. Its proof follows from the observation that, in this case, one can carefully construct the set of outliers so that the distribution of samples in both the possibilities are statistically indistinguishable.

\begin{proposition}\label{prop:m_lb}
 For every domain \DD, there does not exist any \worstcasetester{$\eps_v$}{$0$}{$\delta$} for any $\eps_v\ge\rrr_\DD(m)$ where $m$ is the number of alternatives in the input profile.
\end{proposition}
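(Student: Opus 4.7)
The plan is to exhibit a distribution over type-(i) instances that is nearly identical to the type-(ii) distribution, forcing any candidate tester into a contradiction. Concretely, I will show that a uniformly random profile is itself, with high probability, an instance of the first kind, so a correct tester would have to output $1$ and $0$ on the same profile simultaneously.

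First, fix a set $\BB\in\DD$ realising the content: $|\BB|=\ccc_\DD(m)\cdot m!$, which exists by the definition of $\ccc_\DD$. Draw a profile $\PP=(\suc_i)_{i\in[n]}$ uniformly from $\LL(\AA)^n$ (the second possibility). Each $\suc_i$ independently lies in $\BB$ with probability $\ccc_\DD(m)=1-\rrr_\DD(m)\ge 1-\eps_v$. Letting $\WW=\{i\in[n]:\suc_i\in\BB\}$ and $E=\{|\WW|\ge(1-\eps_v)n\}$, the additive form of \Cref{thm:chernoff} gives $\Pr[E]\ge 1-\eta_n$ with $\eta_n\to 0$ as $n\to\infty$ whenever $\eps_v>\rrr_\DD(m)$. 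On $E$, taking $\XX=\AA$ shows $(\suc_i)_{i\in\WW}\in\BB^{|\WW|}\subseteq\DD$, so $\PP$ itself satisfies condition (i) of the \worstcasetest{$\eps_v$}{$0$}{$\delta$} problem.

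Now suppose for contradiction that a tester $A$ with error at most $\delta<\nfrac{1}{2}$ exists. Its correctness on type-(i) inputs forces $\Pr_A[A(\PP)=1]\ge 1-\delta$ for every profile $\PP$ satisfying (i). Averaging over the uniform draw of $\PP$ and conditioning on $E$,
\begin{align*}
\Pr_{\PP,A}[A(\PP)=0] & \le \Pr[A(\PP)=0\mid E]\Pr[E]+\Pr[\bar E]\le \delta+\eta_n,
\end{align*}
yet type-(ii) correctness demands this quantity be at least $1-\delta$, giving $\eta_n\ge 1-2\delta>0$, which fails for $n$ large enough.

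The delicate point will be the boundary case $\eps_v=\rrr_\DD(m)$, where $|\WW|$ only concentrates \emph{at} rather than comfortably above $(1-\eps_v)n$ and a naive Chernoff bound only yields $\Pr[E]\to \nfrac{1}{2}$ (which suffices only for $\delta<\nfrac{1}{4}$). To close this gap I would invoke the neutrality of $\DD$ to produce several permutation-equivalent maximal witness sets $\BB_1,\dots,\BB_k\in\DD$ and replace $|\WW|$ by $\max_j|\WW_{\BB_j}|$; order-statistic arguments for these identically distributed, $\Theta(\sqrt n)$-spread counts push the maximum to exceed $(1-\eps_v)n$ with probability tending to $1$, restoring the contradiction for every $\delta<\nfrac{1}{2}$.
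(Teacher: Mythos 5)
Your main argument is a different route from the paper's (the paper exhibits one fixed type-(i) profile --- the balanced profile with $\nfrac{n}{m!}$ copies of every order --- whose sampling distribution is identical to that of a uniformly random profile, so the two cases are statistically indistinguishable; you instead show the random profile itself satisfies promise (i), so the two promises overlap). For $\eps_v>\rrr_\DD(m)$ strictly, your overlap argument is sound: conditional on $E$ every realization is a legitimate worst-case type-(i) input, the per-instance guarantee forces $\Pr[A(\PP)=0\mid E]\le\delta$, and the averaged type-(ii) guarantee then yields $1-\delta\le\delta+\eta_n$, a contradiction for large $n$.

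The genuine gap is the boundary case $\eps_v=\rrr_\DD(m)$, which the proposition explicitly includes, and your sketched repair does not close it. With finitely many permutation-images $\BB_1,\dots,\BB_k$ of a maximal witness set, the counts $|\WW_{\BB_j}|$ are correlated, and by a multivariate CLT the probability that their maximum exceeds the common mean $(1-\eps_v)n$ converges to $\Pr[\max_j Z_j\ge 0]$ for a centered Gaussian vector --- a constant that is in general strictly less than $1$ (identical marginals and $\Theta(\sqrt n)$ spread do not by themselves push it to $1$). So ``order-statistic arguments'' as stated only recover the $\delta<\nfrac{1}{4}$ regime you already had. What actually rescues your plan is a deterministic averaging identity, not a fluctuation argument: since the $S_m$-action on $\LL(\AA)$ is simply transitive, every fixed order lies in exactly $|\BB|$ of the $m!$ images $\sigma(\BB)$, so for \emph{every} profile the average of the orbit counts equals $\ccc_\DD(m)\,n$ exactly; hence some $\sigma(\BB)$ (which is in \DD by neutrality) contains at least $\lceil\ccc_\DD(m)n\rceil$ of the preferences, i.e.\ every profile --- with probability $1$, not merely w.h.p.\ --- has preference-distance at most $\rrr_\DD(m)n$. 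That closes the boundary case (and in fact subsumes your Chernoff step). Alternatively, you could fall back on the paper's construction, which handles $\eps_v=\rrr_\DD(m)$ directly because the balanced profile has distance exactly $\rrr_\DD(m)n\le\eps_v n$ while inducing the same distribution of sampled preferences as possibility (ii).
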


\longversion{
\begin{proof}
 Let \AA be a set of alternatives with $|\AA|=m$. Let $\BB\subseteq\LL(\AA)$ be a set of distinct preferences such that $|\BB|=(1-\rrr_\DD(m))m!$ and \BB belongs to the domain \DD; there exists such a \BB from the definition of the residue of domain. For any $n$ divisible by $\nfrac{1}{\rrr_\DD(m)}$, let us consider the profile \PP consisting of $\nfrac{n(1-\rrr_\DD(m))}{|\BB|}=\nfrac{n}{m!}$ copies of every preference in \BB and $\nfrac{n\rrr_\DD(m)}{(m!-|\BB|)}=\nfrac{n}{m!}$ copies of every preference in $\LL(\AA)\setminus\BB$. We observe that the preference-distance of \PP from \DD is at most $\rrr_\DD(m) n\le \eps_v n$. Let \QQ be another profile where every preference is picked uniformly at random with replacement from the set of all $m!$ preferences. Now the result follows from the observation that the distribution of a preference picked uniformly at random with replacement from \PP is the same as the corresponding distribution for \QQ.
\end{proof}
}

We now present our \worstcasetester[single peak]{$\eps_v$}{$0$}{$\delta$} for $\eps_v<\nfrac{1}{3}$ in \Cref{thm:spc_n_test_worst}. We defer our general \worstcasetester{$\eps_v$}{$0$}{$\delta$} till \Cref{cor:gen_n_test_worst_any_eps} which not only handles every $\eps_v<1$ but also takes care of arbitrary domain (but the sample complexity will be worse than that of \Cref{thm:spc_n_test_worst}). The main idea of the algorithm in \Cref{thm:spc_n_test_worst} is exactly the same as the algorithm in \Cref{thm:spc_n_test} -- it samples some preferences restricted to any $3$ alternatives and outputs that the profile is random if all the $6$ possible permutations appear nearly equal number of times; otherwise it says that the profile is close to single peaked.

\begin{theorem}\label{thm:spc_n_test_worst}
 There exists a \worstcasetester[single peak]{$\eps_v$}{$0$}{$\delta$} with sample complexity $\OO(\tfrac{1}{(1-3\eps_v)^2}\ln\tfrac{1}{\delta})$ for every $0\le\eps_v<\nfrac{1}{3}$.
\end{theorem}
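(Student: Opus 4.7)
The plan is to reuse the algorithmic template of \Cref{thm:spc_n_test}---sample $\el$ preferences restricted to three fixed alternatives and check whether some permutation among the six is ``under-represented''---but with a shifted acceptance threshold that absorbs adversarial outliers. Concretely, I would pick any three alternatives $a,b,c\in\AA$, sample $\el = \Theta(\tfrac{1}{(1-3\eps_v)^2}\ln\tfrac{1}{\delta})$ preferences restricted to $\{a,b,c\}$ uniformly at random with replacement, let $X_1,\ldots,X_6$ count the occurrences of the six permutations of $\{a,b,c\}$ in the sample, and output $1$ iff $\min_i X_i < \tfrac{(1+3\eps_v)\el}{12}$. The threshold shift from $(1+\eps_v)$ to $(1+3\eps_v)$ is precisely what is needed to tolerate worst-case outliers instead of only uniformly random ones.

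The key structural observation is that any single-peaked profile on three alternatives uses at most four of the six linear orders on $\{a,b,c\}$, because for any fixed societal axis the two orders that rank the middle alternative of the axis at the bottom are forbidden. Hence, in the first case of \Cref{def:worst_rand}, if $\WW\subseteq[n]$ of size at least $(1-\eps_v)n$ witnesses single peakedness, then at least two permutations of $\{a,b,c\}$ are absent from $(\suc_i(\{a,b,c\}))_{i\in\WW}$ entirely. Since outliers contribute at most $\eps_v n$ preferences in total, the combined empirical frequency of these two ``missing'' permutations across the whole input profile is at most $\eps_v$, so at least one of them, call its index $\eta$, has empirical frequency at most $\eps_v/2$. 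Therefore $\E{X_\eta} \le \tfrac{\eps_v\el}{2}$.

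The two error analyses are then routine applications of the additive form of \Cref{thm:chernoff}. In Case~(i), the gap between the threshold $\tfrac{(1+3\eps_v)\el}{12}=\tfrac{\el}{12}+\tfrac{3\eps_v\el}{12}$ and the upper bound $\tfrac{6\eps_v\el}{12}$ on $\E{X_\eta}$ is $\tfrac{(1-3\eps_v)\el}{12}$, giving $\Pr[X_\eta \ge \tfrac{(1+3\eps_v)\el}{12}] \le 2\exp(-\tfrac{\el(1-3\eps_v)^2}{72})$. In Case~(ii), each $X_i$ has mean exactly $\tfrac{2\el}{12}$, so the gap from the threshold is again $\tfrac{(1-3\eps_v)\el}{12}$; a union bound over the six permutations yields $\Pr[\exists i : X_i < \tfrac{(1+3\eps_v)\el}{12}] \le 12\exp(-\tfrac{\el(1-3\eps_v)^2}{72})$. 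Both are at most $\delta$ as soon as $\el = \Omega(\tfrac{1}{(1-3\eps_v)^2}\ln\tfrac{1}{\delta})$, yielding the claimed sample complexity.

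The main obstacle---indeed the only real subtlety---is the structural step above. With worst-case outliers one cannot hope for a small expected count for each individually forbidden permutation, because an adversary can concentrate all $\eps_v n$ outliers on a single such permutation, making that permutation's sample count comparable to the uniform case. The rescue is the pigeonhole: at least \emph{two} permutations must be entirely outlier-generated, and the adversary cannot make both of them large simultaneously within a total outlier budget of $\eps_v n$; hence the rarer of the two has frequency at most $\eps_v/2$, which creates a detectable separation from the uniform mean $\tfrac{1}{6}$ precisely when $\eps_v < \nfrac{1}{3}$. This matches the impossibility bound of \Cref{prop:m_lb}, since $\rrr_\DD(3)=\nfrac{1}{3}$ for the single peaked domain.
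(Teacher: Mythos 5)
Your proposal is correct and is essentially the paper's own argument: the same algorithm as \Cref{thm:spc_n_test} with the threshold shifted to $\tfrac{\el}{12}(1+3\eps_v)$, justified by the observation that a single-peaked profile on three alternatives omits two of the six orders (i.e., $\rrr_{\text{single peak}}(3)=\nfrac{1}{3}$), so some permutation has frequency at most $\nfrac{\eps_v}{2}$ in the whole profile, followed by Chernoff and union bounds. Your pigeonhole step over the two missing permutations is exactly the detail the paper compresses into that observation, so there is nothing to add.
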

\begin{proof}
 \longversion{Our algorithm for this case is similar to \Cref{thm:spc_n_test} and thus we only elaborate the differences.} As in \Cref{thm:spc_n_test}, we choose any $3$ alternatives $a, b,$ and $c$, pick $\el=\tfrac{72}{(1-3\eps_v)^2}\ln\tfrac{6}{\delta}$ preferences uniformly at random with replacement, and query oracle to know how $a, b,$ and $c$ are ordered in these preferences. We output $1$ if $\min_{i\in[6]} X_i < \tfrac{\el}{12}(1+3\eps_v)$ and output $0$ otherwise (with notation as defined in the proof of \Cref{thm:spc_n_test}). The proof of correctness and the analysis of the sample complexity of our algorithm is similar to \Cref{thm:spc_n_test} using the observation that, when the input profile can be made single peaked by deleting at most $\eps_v n$ preferences, there exists an $\eta\in[6]$ such that $X_\eta\le\nfrac{\eps_v\el}{2}$ since $\rrr_{\text{single peak}}(3)=\nfrac{1}{3}$.
\end{proof}

From the proof of \Cref{thm:spc_n_test_worst}, the following generalization to arbitrary domain is immediate.

\begin{corollary}\label{cor:gen_n_test_worst}
 Let $m_0=\min\{m\in\NB: \rrr_\DD(m)<1\}$ and the number of alternatives is at least $m_0$. Then there exists a \worstcasetester{$\eps_v$}{$0$}{$\delta$} with sample complexity $\OO(\tfrac{1}{(1-(\nfrac{\eps_v}{\rrr_\DD(m_0)}))^2}\ln\tfrac{1}{\delta})$ for every $\eps_v$ with $0\le\eps_v<\rrr_\DD(m_0)$ (the \OO notation in the sample complexity hides constant which depends on $m_0$).
\end{corollary}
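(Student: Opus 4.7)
The plan is to lift the three-alternative argument of Theorem \ref{thm:spc_n_test_worst} to $m_0$ alternatives, replacing the constant $1/3 = \rrr_{\text{single peak}}(3)$ by $\rrr_\DD(m_0)$. Concretely, since the number of alternatives is at least $m_0$, I would fix an arbitrary subset $\BB\subseteq\AA$ with $|\BB|=m_0$, draw $\ell = \Theta\bigl(\tfrac{1}{(1-\eps_v/\rrr_\DD(m_0))^2}\ln\tfrac{1}{\delta}\bigr)$ preferences uniformly at random with replacement (the constant absorbing $(m_0!)^2$), and query the oracle for the restriction of each sampled preference to \BB, which is legitimate because \DD is normal. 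Letting $X_\pi$ be the number of sampled preferences whose restriction equals $\pi\in\LL(\BB)$, the tester outputs $1$ if $\min_{\pi\in\LL(\BB)} X_\pi < \tfrac{\ell}{2\,m_0!}\bigl(1+\tfrac{\eps_v}{\rrr_\DD(m_0)}\bigr)$, and outputs $0$ otherwise.

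For correctness in the "close to \DD" case, I would argue by a pigeonhole step generalizing the single-peaked calculation. If $\PP$ has preference-distance at most $\eps_v n$ from \DD, there is a sub-profile $\QQ$ of size at least $(1-\eps_v)n$ with $\QQ\in\DD$; by the definition of content, the restricted sub-profile $(\suc_i(\BB))_{i\in\QQ}$ uses at most $\ccc_\DD(m_0)\, m_0!$ distinct orderings on $\BB$, so at least $\rrr_\DD(m_0)\, m_0!$ orderings in $\LL(\BB)$ never appear in \QQ. The remaining $\eps_v n$ outlier preferences therefore account for the entire mass on these forbidden orderings, and by averaging, at least one forbidden ordering $\pi^\star$ satisfies $\EB[X_{\pi^\star}] \le \tfrac{\eps_v\,\ell}{\rrr_\DD(m_0)\, m_0!}$. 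Applying the additive Chernoff bound to $X_{\pi^\star}$ and the fact that the threshold exceeds this mean by $\tfrac{\ell}{2\,m_0!}\bigl(1-\tfrac{\eps_v}{\rrr_\DD(m_0)}\bigr)$ gives error probability at most $\delta$.

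For the uniform case, every $\pi\in\LL(\BB)$ has $\EB[X_\pi] = \ell/m_0!$, which strictly exceeds the threshold by the same additive gap since $\eps_v/\rrr_\DD(m_0)<1$. A Chernoff bound for each $\pi$ followed by a union bound over the $m_0!$ orderings (a constant depending only on the domain) yields total error at most $\delta$. Combining the two cases, the sample complexity is $\OO\bigl(\tfrac{1}{(1-\eps_v/\rrr_\DD(m_0))^2}\ln\tfrac{1}{\delta}\bigr)$ as claimed.

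The only genuinely new ingredient beyond Theorem \ref{thm:spc_n_test_worst} is the pigeonhole step, which I view as the main obstacle: one must see that the definition of $\rrr_\DD(m_0)$ forces a uniform lower bound of $\rrr_\DD(m_0)\,m_0!$ on the number of orderings excluded from any $\DD$-profile on \BB, independent of which particular $\BB\in\DD$ the good sub-profile lives in. Once this is in place, everything else is a clean substitution of parameters into the proof of Theorem \ref{thm:spc_n_test_worst}.
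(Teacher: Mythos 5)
Your proposal is correct and is essentially the paper's own argument: the paper declares this corollary ``immediate'' from the proof of \Cref{thm:spc_n_test_worst}, and the intended generalization (restrict to an arbitrary $m_0$-subset, threshold $\tfrac{\ell}{2m_0!}(1+\eps_v/\rrr_\DD(m_0))$, and the pigeonhole observation that some ordering excluded from any $\DD$-profile on $m_0$ alternatives has expected count at most $\tfrac{\eps_v\ell}{\rrr_\DD(m_0)m_0!}$) is exactly what the paper spells out in the closely related \Cref{cor:gen_n_test_worst_any_eps}. One cosmetic note: normality is needed not to ``legitimize'' querying restrictions but to guarantee that the restriction of the good sub-profile to $\BB$ stays in $\DD$ so the content bound applies—which is how you in fact use it in the pigeonhole step.
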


We now present our \worstcasetester{$\eps_v$}{$0$}{$\delta$} for any $\eps_v<1$ generalizing \Cref{thm:spc_n_test_worst}. Of course we need the number of alternatives to be at least $m(\eps_v)$ where $m(\eps_v)=\min\{m\in\NB: \rrr_\DD(m)>\eps_v\}$ due to \Cref{prop:m_lb}.

\begin{theorem}\label{cor:gen_n_test_worst_any_eps}
 Given a domain \DD, any $\eps_v$ with $0\le\eps_v<1$ with $\rrr_\DD(m)>\eps_v$, there exists a \worstcasetester{$\eps_v$}{$0$}{$\delta$} with sample complexity $\OO(\tfrac{m(\eps_v)! m(\eps_v)^2 \log^2 m(\eps_v)}{(1-(\nfrac{\eps_v}{\rrr_\DD(m(\eps_v))}))^2}\ln\tfrac{1}{\delta})$ where $m(\eps_v)=\min\{\el\in\NB: \rrr_\DD(\el)>\eps_v\}$.
\end{theorem}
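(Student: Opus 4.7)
The plan is to lift the strategy of \Cref{cor:gen_n_test_worst} by working over a strictly larger alphabet. Choose $m^\prm := m(\eps_v)$, let $\BB \subseteq \AA$ be any subset with $|\BB| = m^\prm$, and sample $\el$ preferences from the input profile restricted to $\BB$ (each sample costs $\OO(m^{\prm 2})$ comparison queries, e.g.\ the $\binom{m^\prm}{2}$ pairwise comparisons on $\BB$, or $\OO(m^\prm \log m^\prm)$ via sorting). For each of the $m^\prm !$ orderings $\pi \in \LL(\BB)$ let $X_\pi$ be the count of samples equal to $\pi$, and output $1$ iff $\min_\pi X_\pi < T$, where the threshold will be $T = \frac{\el}{2 \cdot m^\prm !}\bigl(1 + \frac{\eps_v}{\rrr_\DD(m^\prm)}\bigr)$.

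The key structural step is to show that in the Type~(i) case some ordering is genuinely rare. By normality of $\DD$, if a sub-profile of size $\ge (1-\eps_v)n$ lies in $\DD$, then so does its restriction to $\BB$; by the definition of $\rrr_\DD(m^\prm)$ that restriction uses at most $(1-\rrr_\DD(m^\prm))\,m^\prm !$ distinct orderings of $\BB$, leaving at least $\rrr_\DD(m^\prm)\,m^\prm !$ orderings of $\BB$ entirely unused by the ``clean'' part of the profile. The remaining $\le \eps_v n$ outliers distribute at most $\eps_v n$ occurrences across these unused orderings, so by averaging at least one unused $\pi^\star$ occurs in $\PP|_\BB$ with frequency at most $\eps_v/(\rrr_\DD(m^\prm)\,m^\prm !)$. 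Since we picked $m^\prm$ so that $\rrr_\DD(m^\prm) > \eps_v$, this frequency is strictly below $1/m^\prm !$, which is the exact frequency of every ordering in the Type~(ii) (uniform) case; the threshold $T$ sits at the midpoint of these two values, giving a one-sided gap of $\alpha := (1-\eps_v/\rrr_\DD(m^\prm))/(2\,m^\prm !)$.

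For the error analysis, apply the multiplicative form of \Cref{thm:chernoff} to each $X_\pi$. In Type~(ii) each $X_\pi$ has mean $\el/m^\prm !$ and we need $X_\pi \ge T$; this fails with probability at most $\exp\!\bigl(-(1-\eps_v/\rrr_\DD(m^\prm))^2 \el/(24\,m^\prm !)\bigr)$, which union-bounded over the $m^\prm !$ orderings is at most $\delta$ provided $\el = \Omega\!\bigl(m^\prm ! \ln(m^\prm !/\delta)/(1-\eps_v/\rrr_\DD(m^\prm))^2\bigr)$. For the symmetric direction on $\pi^\star$ in Type~(i) the same sample size suffices. Multiplying by the per-sample cost $\OO(m^{\prm 2})$ and using $\ln(m^\prm !) = \OO(m^\prm \log m^\prm)$, the total query complexity is
\[
\OO\!\left( \frac{m^\prm ! \, m^{\prm 2} \, \log m^\prm \, (m^\prm \log m^\prm + \log \tfrac{1}{\delta})}{(1 - \eps_v/\rrr_\DD(m^\prm))^2} \right),
\]
which is absorbed into the stated $\OO\!\bigl(\frac{m(\eps_v)!\,m(\eps_v)^2\,\log^2 m(\eps_v)}{(1-\eps_v/\rrr_\DD(m(\eps_v)))^2}\ln\tfrac{1}{\delta}\bigr)$.

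The only delicate point is the pigeonhole/averaging argument: unlike \Cref{thm:spc_n_test_worst}, where $m_0 = 3$ and a single ``forbidden'' permutation is easy to track, here $m^\prm$ may be large and we do not know which orderings $\DD$ permits on $\BB$. Normality of $\DD$ lets us commit to \emph{any} $m^\prm$-subset $\BB$ in advance, and the content/residue bound then forces the existence of some rare ordering without our having to identify it; the algorithm simply tests all $m^\prm !$ orderings, paying the $m^\prm !$ factor in the sample complexity. The remainder is Chernoff bookkeeping identical in spirit to \Cref{thm:spc_n_test_worst}.
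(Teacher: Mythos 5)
Your proposal is correct and follows essentially the same route as the paper's own proof: fix an arbitrary subset of $m(\eps_v)$ alternatives, sample preferences restricted to it, count all $m(\eps_v)!$ orderings, threshold the minimum count at $\tfrac{\el}{2\,m(\eps_v)!}\bigl(1+\nfrac{\eps_v}{\rrr_\DD(m(\eps_v))}\bigr)$, and use normality together with the content/residue bound (your averaging step is exactly what the paper invokes via ``follows from the definition of $\rrr_\DD(m(\eps_v))$''), finishing with Chernoff and a union bound over the $m(\eps_v)!$ orderings. The only caveat is the final bookkeeping: the expression you derive contains a term of order $m(\eps_v)!\,m(\eps_v)^3\log^2 m(\eps_v)$, which is not absorbed by the stated bound when $\ln\nfrac{1}{\delta}\ll m(\eps_v)$; taking $\el=\Theta\bigl(\tfrac{m(\eps_v)!\,m(\eps_v)\log m(\eps_v)}{(1-\nfrac{\eps_v}{\rrr_\DD(m(\eps_v))})^2}\ln\tfrac{1}{\delta}\bigr)$ as the paper does (which dominates your additive requirement since $\delta<\nfrac{1}{2}$) and eliciting each sampled preference by sorting with $\OO(m(\eps_v)\log m(\eps_v))$ comparisons yields exactly the claimed sample complexity.
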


\begin{proof}
 Let $\AA^\pr\subseteq\AA$ be any subset of alternatives with $|\AA^\pr|=m(\eps_v)$. We pick $\el=\tfrac{16m(\eps_v)! m(\eps_v) \log m(\eps_v)}{(1-(\nfrac{\eps_v}{\rrr_\DD(m(\eps_v))}))^2}\ln\tfrac{1}{\delta}$ preferences uniformly at random and elicit these preferences restricted to $\AA^\pr$. For ${\suc}\in\LL(\AA^\pr)$, let $X_\suc$ be the random variable denoting the number of sampled preferences which are the same as \suc. We output $1$ if $\min_{{\suc}\in\LL(\AA^\pr)} X_\suc < \tfrac{\el}{2m(\eps_v)!}(1+(\nfrac{\eps_v}{\rrr_\DD(m(\eps_v))}))$ and output $0$ otherwise. The sample complexity complexity of the algorithm is $\OO(\tfrac{m(\eps_v)! m(\eps_v)^2 \log^2 m(\eps_v)}{(1-(\nfrac{\eps_v}{\rrr_\DD(m(\eps_v))}))^2}\ln\tfrac{1}{\delta})$. The proof of correctness  of our algorithm is similar to that of \Cref{thm:spc_n_test} using the observation that, when the input profile can be made single peaked by deleting at most $\eps_v n$ preferences, there exists an ${\suc}\in\LL(\AA^\pr)$ such that $X_\suc\le \tfrac{\eps_v\el}{\rrr_\DD(m(\eps_v))m(\eps_v)!}$ (follows from the definition of $\rrr_\DD(m(\eps_v))$).
\end{proof}

We now present our result for the \worstworstcasetest{$\eps_v$}{$0$}{$\eps_v^\pr$}{$0$}{$\delta$} problem. The following structural result provides the key building block of our algorithm. Intuitively the lemma proves that, given a profile \PP, if we sample preferences from \PP uniformly at random with replacement to construct another profile \QQ (of certain size), then the ``relative'' distance of \QQ from any domain \DD is approximately same as the relative distance of \PP from \DD.

\begin{lemma}\label{lem:worst}
 Let \DD be any normal and neutral domain and $(\suc_i)_{i\in[n]}\in\LL(\AA)^n$ be a profile with preference-distance being $\eps_v n$ from \DD. Let $0<\Delta<\min\{\eps_v,1-\eps_v\}$, $\el=\tfrac{4}{\Delta^2}(\ccc_\DD(m)m! m\ln m + \ln \nfrac{1}{\delta})$, and $\suc^\pr=(\suc_i^\pr)_{i\in[\el]}$ be a profile where $\suc_i^\pr$ has been picked uniformly at random with replacement from the $n$ preferences of \suc. Then the preference-distance of $\suc^\pr$ from \DD is at least $(\eps_v-\Delta)\el$ and at most $(\eps_v+\Delta)\el$ with probability at least $1-\delta$ for every $0<\delta<1$.
\end{lemma}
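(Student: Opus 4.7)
The plan is to reduce the claim to a uniform Chernoff bound over a finite family of ``candidate witnesses'' for membership in $\DD$. Let $\FF := \{\BB \in \DD : \BB \subseteq \LL(\AA)\}$, the restriction of $\DD$ to the ambient alternative set $\AA$. By the definition of content, every $\BB \in \FF$ has $|\BB| \le \ccc_\DD(m)\,m!$, so
\[
 |\FF| \le \sum_{k=0}^{\lfloor \ccc_\DD(m) m! \rfloor}\binom{m!}{k} \le (m!)^{\ccc_\DD(m)m! + 1},
\]
giving $\ln |\FF| \le 2\,\ccc_\DD(m)\,m!\,m\ln m$ via $\ln m! \le m\ln m$. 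The key observation is that any maximum-size in-domain subprofile of any profile over $\AA$ is witnessed by some $\BB \in \FF$, so $\FF$ is exactly the right family over which to union-bound.

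For each $\BB \in \FF$, let $f_\BB := \tfrac{1}{n}|\{i:{\suc_i} \in \BB\}|$ and $Z_\BB := \tfrac{1}{\el}|\{j:{\suc_j^\pr} \in \BB\}|$. Because each $\suc_j^\pr$ is drawn uniformly with replacement from $\suc$, the indicators $\mathbf{1}[{\suc_j^\pr} \in \BB]$ are i.i.d.\ Bernoulli with mean $f_\BB$, and the additive Chernoff bound of \Cref{thm:chernoff} gives
\[
 \Pr\bigl[|Z_\BB - f_\BB| \ge \Delta/2\bigr] \le 2\exp(-\el\,\Delta^2/2).
\]
A union bound over $\FF$, combined with the chosen value of $\el$ and the estimate on $\ln|\FF|$ above, then yields $|Z_\BB - f_\BB| < \Delta/2$ for every $\BB \in \FF$ simultaneously, with probability at least $1 - \delta$.

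It remains to translate this uniform concentration back into preference-distance. By definition, the preference-distance of $\suc$ from $\DD$ equals $n\bigl(1 - \max_{\BB \in \FF} f_\BB\bigr)$, so the hypothesis gives $\max_\BB f_\BB = 1 - \eps_v$; analogously, the preference-distance of $\suc^\pr$ is $\el\bigl(1 - \max_{\BB \in \FF} Z_\BB\bigr)$. On the good event, $\max_\BB Z_\BB$ lies within $\Delta/2$ of $1 - \eps_v$, which places the preference-distance of $\suc^\pr$ in $[(\eps_v - \Delta/2)\el,\, (\eps_v + \Delta/2)\el] \subseteq [(\eps_v - \Delta)\el,\, (\eps_v + \Delta)\el]$, as required. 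The main delicate step is the cardinality bound on $\FF$: a naive estimate would incur a $2^{m!}$ factor that dwarfs the $\ccc_\DD(m)\,m!\,m\ln m$ term budgeted in $\el$, and the content function is precisely the structural device that cuts this down to the needed size.
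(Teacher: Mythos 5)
Your proof is correct, and it takes a noticeably different (and cleaner) route than the paper. The paper argues the two directions separately: for the lower bound on the distance of $\suc^\pr$ it fixes one minimum deletion set $J$ of the original profile and applies a single Chernoff bound to the number of samples landing in $J$, and for the upper bound it applies a union bound over in-domain sets $\SS$. You instead prove a single two-sided uniform concentration statement over the whole family $\FF$ of candidate witness sets and then read off both bounds from the identity $\text{distance} = \el\bigl(1-\max_{\BB}Z_\BB\bigr)$. This buys two things. First, your explicit count $|\FF|\le (m!)^{\ccc_\DD(m)m!+1}$ is exactly the accounting that justifies the $\ccc_\DD(m)m!\,m\ln m$ term in $\el$; the paper's union bound is stated with a much rougher count of the number of in-domain sets and leaves this bookkeeping implicit. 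Second, and more substantively, your union bound is what the ``distance not too small'' direction genuinely requires: a sampled preference that coincides with an outlier of the original profile need not be an outlier of the sampled profile (it may join a large in-domain subprofile under a \emph{different} witness $\BB$), so controlling only the count of samples hitting the fixed set $J$, as the paper does, does not by itself bound that event, whereas controlling $Z_\BB$ simultaneously for all $\BB\in\FF$ does; conversely, the ``distance not too large'' direction needs only the single maximal witness, which your argument also covers for free. Two cosmetic remarks: strictly, the witnesses should be the traces $\BB\cap\LL(\AA)$ for $\BB\in\DD$ rather than the members of $\DD$ contained in $\LL(\AA)$, but the size and counting bounds are identical, so nothing changes; and your final failure probability is $2\delta^2$, which is at most $\delta$ only for $\delta\le\nfrac{1}{2}$ — the same constant-factor slack at large $\delta$ that the paper's own proof has, and it is immaterial since the statement is only interesting for small $\delta$ (or can be absorbed by adjusting the constant in $\el$).
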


\longversion{
\begin{proof}
 Let $J\subset[n]$ be a subset of $[n]$ with $|J|=\eps_v n$ such that $(\suc_i)_{i\in[n]\setminus J}\in\DD$; there exists such a $J$ since the preference-distance of \suc to \DD is $\eps_v n$. Let $X$ be the random variable denoting the number of $i\in[\el]$ such that $\suc_i^\pr=\suc_j$ for some $j\in J$. We have $\EB[X]=\eps_v\el$. Using Chernoff bound, we have the following. 
 \begin{align*}
  &\Pr[\text{preference-distance of } \suc^\pr \text{ to }\DD \text{ is less than }(\eps_v-\Delta)\el ]\\
  &\le \Pr[X>(\eps_v-\Delta)\el]\\
  &\le \exp\{-\nfrac{\Delta^2 \el}{2}\}\\
  &\le \nfrac{\delta}{2}
 \end{align*}

 We now bound the probability that the preference-distance of $\suc^\pr$ to \DD is more than  $(\eps_v+\Delta)\el$. Let $\SS\subset\LL(\AA)$ be a profile in \DD. Let $X_\SS$ be the random variable denoting the number of $i\in[\el]$ such that $\suc_i^\pr\in\SS$ and $C_\SS$ be the number of $i\in [n]$ such that $\suc_i\in\SS$. Then we have $\EB[X_\SS]=\nfrac{\el C_\SS}{n}$ and since the preference-distance of \suc to \DD is $\eps_v n$, we have $C_\SS \le (1-\eps_v)n$. Now using Chernoff bound, we have the following.
 \begin{align*}
  \Pr[X_\SS < (1-\eps_v-\Delta)\el] &\le \exp\{-\nfrac{\Delta^2 \el}{3}\}
 \end{align*}
 
 Now using union bound over all such \SS in \DD, we have the following.
 \begin{align*}
  &\Pr[\text{preference-distance of } \suc^\pr \text{ to }\DD \text{ is more than }(\eps_v+\Delta)\el]\\
  &\le \Pr[\exists \SS\subset\LL(\AA) \text{ with } \SS\in\DD, X_\SS < (1-\eps_v-\Delta)\el]\\
  &\le \ccc_\DD(m)m!\exp\{-\nfrac{\Delta^2 \el}{3}\}\\
  &\le \nfrac{\delta}{2}
 \end{align*}
 The third inequality follows from the fact that $\ccc_\DD(m)m!\le 1$ for every \DD.
\end{proof}
}

We now present our \worstworstcasetester{$\eps_v$}{$0$}{$\eps_v^\pr$}{$0$}{$\delta$}. The high level idea is to sample some number \el of preferences, compute the distance $\eps^\prr\el$ of the resulting profile from the single peaked domain, and output the distance of the original profile to be $\eps n$ if and only if $\eps^\prr$ is closer to \eps than $\eps^\pr$.

\begin{algorithm}[!htbp]
 \caption{\worstworstcasetester[single peak]{$\eps_v$}{$0$}{$\eps_v^\pr$}{$0$}{$\delta$}\label{code:pref_outlier_arb}}
  \begin{algorithmic}[1]
    \Require{Oracle access to a profile \PP}
    \Ensure{$1$ if there exists $\eps_v n$ preferences whose deletion makes the resulting profile single peaked and $0$ if deleting any $\eps_v^\pr n$ preferences from \PP does not make the resulting profile single peaked}
    \State{Sample $\el=\tfrac{64}{(\eps_v^\pr-\eps_v)^2}(2^m m\ln m + \ln \nfrac{1}{\delta})$ preferences uniformly at random from the input profile with replacement. Let $\BB\in\LL(\AA)^\el$ be the profile of sampled preferences}
    \State{Let $t$ be the minimum number of times any preference in $\LL(\{a,b,c\})$ appear in \BB}
    \If{\BB can be made single peaked by deleting at most $\nfrac{(\eps_v+\eps_v^\pr)\el}{2}$ preferences}
        \Return $1$
    \Else
        \Return $0$
    \EndIf
  \end{algorithmic}
\end{algorithm}

\begin{theorem}\label{thm:gen_n_test_worst_worst}
 For every domain \DD, there exists a \worstworstcasetester{$\eps_v$}{$0$}{$\eps_v^\pr$}{$0$}{$\delta$} with sample complexity $\OO(\tfrac{1}{(\eps_v^\pr-\eps_v)^2}(\ccc_\DD(m)m! m^2\log^2 m + \log \nfrac{1}{\delta}))$ for every $0\le \eps_v<\eps_v^\pr<1$ and $0<\delta<\nfrac{1}{2}$.
\end{theorem}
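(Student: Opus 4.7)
The plan is to extend the midpoint–threshold strategy that Algorithm~\ref{code:pref_outlier_arb} uses for the single peaked case to an arbitrary normal and neutral domain $\DD$, with Lemma~\ref{lem:worst} as the central tool. The key observation is that Lemma~\ref{lem:worst} lets us estimate the (normalized) preference-distance of the input profile from $\DD$ up to additive error $\Delta$ with high probability by drawing $\el = \tfrac{4}{\Delta^2}(\ccc_\DD(m)m!\, m\ln m + \ln(2/\delta))$ preferences uniformly at random with replacement. Setting $\Delta = (\eps_v^\pr - \eps_v)/2$ makes the two cases of the problem statistically separable at the midpoint $\tfrac{\eps_v + \eps_v^\pr}{2}\el$.

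Concretely, the algorithm would first sample $\el$ preferences from the input profile using the oracle. For each sampled preference, it would fully elicit the linear order by $\OO(m\log m)$ comparison queries (e.g., by running a sorting algorithm that uses the oracle as its comparator). Let $\BB \in \LL(\AA)^\el$ be the resulting profile. The algorithm then computes the preference-distance $d(\BB,\DD)$ of $\BB$ from $\DD$ (purely a computation on the sampled data, so it does not contribute to sample complexity), and outputs $1$ if $d(\BB,\DD) \le \tfrac{\eps_v+\eps_v^\pr}{2}\el$ and $0$ otherwise.

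For correctness, I would split into the two input cases. In case~(i) the input profile has preference-distance at most $\eps_v n$ from $\DD$, so by Lemma~\ref{lem:worst} applied with $\Delta = (\eps_v^\pr-\eps_v)/2$, with probability at least $1-\delta$ we have $d(\BB,\DD) \le (\eps_v + \Delta)\el = \tfrac{\eps_v+\eps_v^\pr}{2}\el$, so the algorithm outputs $1$. In case~(ii) the input profile has preference-distance strictly greater than $\eps_v^\pr n$ from $\DD$, so the same lemma (applied to any profile realizing that exact distance, or by a small monotonicity argument on its lower-bound direction) gives $d(\BB,\DD) \ge (\eps_v^\pr - \Delta)\el = \tfrac{\eps_v+\eps_v^\pr}{2}\el$ with probability at least $1-\delta$, making the algorithm output $0$.

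The sample complexity follows from multiplying the number of sampled preferences $\el = \OO\!\bigl(\tfrac{1}{(\eps_v^\pr-\eps_v)^2}(\ccc_\DD(m)m!\,m\log m + \log\tfrac{1}{\delta})\bigr)$ by the $\OO(m\log m)$ comparison queries needed per preference, matching the claimed bound. The main subtlety I expect is not in the probability analysis but in (a) making sure the lower-bound direction of Lemma~\ref{lem:worst} can be invoked for case~(ii), since the lemma as stated supposes an exact preference-distance of $\eps_v n$, which one handles by observing that adding outliers can only increase the distance from $\DD$, so any profile with distance $>\eps_v^\pr n$ has at least $\eps_v^\pr n$ preferences that must eventually be deleted; and (b) the fact that computing $d(\BB,\DD)$ may be computationally hard, but since the theorem bounds only query complexity this is immaterial.
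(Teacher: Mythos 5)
Your proposal matches the paper's own proof essentially verbatim: sample $\el=\OO\bigl(\tfrac{1}{(\eps_v^\pr-\eps_v)^2}(\ccc_\DD(m)m!\,m\log m+\log\tfrac{1}{\delta})\bigr)$ preferences, fully elicit each with $\OO(m\log m)$ comparisons, threshold the sampled preference-distance at $\tfrac{(\eps_v+\eps_v^\pr)\el}{2}$, and invoke \Cref{lem:worst} with $\Delta=\tfrac{\eps_v^\pr-\eps_v}{2}$ for both error directions. Your added remarks on the monotonicity needed in case~(ii) and on query versus computational complexity are correct refinements of details the paper leaves implicit, so no changes are needed.
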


\longversion{
\begin{proof}
 \longversion{We present our algorithm in \Cref{code:pref_outlier_arb}. }We pick $\el=\tfrac{64}{(\eps_v^\pr-\eps_v)^2}(\ccc_\DD(m)m! m\ln m + \ln \nfrac{1}{\delta})$ preferences uniformly at random and query oracle to completely learn these preferences. Let the sampled profile be $\suc^\pr$. If the preference-distance of $\suc^\pr$ to \DD is at most $\nfrac{(\eps_v+\eps_v^\pr)\el}{2}$, then we output $1$; otherwise we output $0$. From \Cref{lem:worst}, we immediately have that the probability of error is at most $\delta$. The sample complexity follows from the fact that, we query each preference $\OO(m\log m)$ times to learn to completely.
\end{proof}
}

We observe that $\ccc_{\text{single peaked}}(m)=\nfrac{2^{m-1}}{m!}$~\cite[Lemma 2]{EscoffierLO08} and $\ccc_{\text{single crossing}}(m)=\nfrac{({m\choose 2}+1)}{m!}$. Hence, from \Cref{thm:gen_n_test_worst_worst}, we obtain the following result for the single peaked and single crossing domains.

\begin{corollary}\label{cor:spc_n_test_worst_worst}
 There exists a \worstworstcasetester{$\eps_v$}{$0$}{$\eps_v^\pr$}{$0$}{$\delta$} with sample complexity $\OO(\tfrac{1}{(\eps_v^\pr-\eps_v)^2}(2^m m^2\log^2 m + \log \nfrac{1}{\delta}))$ for the single peaked domain and with sample complexity $\OO(\tfrac{1}{(\eps_v^\pr-\eps_v)^2}(m^4\log^2 m + \log \nfrac{1}{\delta}))$ for the single crossing domain for every $0\le \eps_v<\eps_v^\pr<1$ and $0<\delta<\nfrac{1}{2}$.
\end{corollary}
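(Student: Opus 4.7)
The plan is a direct specialization of Theorem \ref{thm:gen_n_test_worst_worst}, so no new algorithmic ideas are needed; the entire task reduces to plugging in known values of $\ccc_\DD(m)$ for the two domains. Theorem \ref{thm:gen_n_test_worst_worst} already produces a \worstworstcasetester{$\eps_v$}{$0$}{$\eps_v^\pr$}{$0$}{$\delta$} for any normal, neutral domain \DD with sample complexity $\OO\!\left(\tfrac{1}{(\eps_v^\pr-\eps_v)^2}(\ccc_\DD(m)\,m!\,m^2\log^2 m + \log\nfrac{1}{\delta})\right)$; since both single peaked and single crossing are normal and neutral (as noted in Section \ref{sec:prelim}), the theorem applies off the shelf.

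For the single peaked case, I would cite the classical count (Escoffier et al.), which says that a single peaked profile on $m$ alternatives contains at most $2^{m-1}$ distinct preferences. By the definition of content, this is exactly $\ccc_{\text{single peaked}}(m)\,m! = 2^{m-1}$. Substituting into the generic bound, the alternative-dependent term becomes $2^{m-1}\cdot m^2\log^2 m = \OO(2^m m^2\log^2 m)$, giving the first claim. For single crossing, the analogous count (Dey and Misra) gives $\ccc_{\text{single crossing}}(m)\,m! = \binom{m}{2}+1 = \OO(m^2)$, so the alternative-dependent term becomes $\OO(m^2\cdot m^2\log^2 m) = \OO(m^4\log^2 m)$, giving the second claim.

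The main (and only) obstacle is really a bookkeeping check: making sure the cited content expressions match the definition of $\ccc_\DD$ used earlier, namely the maximum number of distinct preferences in a profile in the domain divided by $m!$. Both formulas are standard and already mentioned in Section \ref{sec:prelim} of the paper, so there is nothing to prove beyond the substitution. No additional concentration argument, structural lemma, or algorithm modification is required, which is why the corollary is labelled as such rather than as a theorem.
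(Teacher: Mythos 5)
Your proposal is correct and matches the paper's own proof: the corollary is obtained by substituting $\ccc_{\text{single peaked}}(m)\,m! = 2^{m-1}$ (Escoffier et al.) and $\ccc_{\text{single crossing}}(m)\,m! = \binom{m}{2}+1$ into the bound of \Cref{thm:gen_n_test_worst_worst}, exactly as you describe. Nothing further is needed.
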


\subsection{Only Alternatives as Outliers}

In this subsection, we now focus on the case when only alternatives are considered as outliers. We observe that when only alternatives act as outliers, the \randomtest{$0$}{$\eps_a$}{$\delta$} and \worstcasetest{$0$}{$\eps_a$}{$\delta$} are the same problem. We begin with presenting our \randomtester[single peak]{$0$}{$\eps_a$}{$\delta$} in \Cref{thm:spc_m_test} below. On a high level, our algorithm in \Cref{thm:spc_m_test} samples some number $t$ of preferences restricted to some number \el of alternatives. If for every $3$ alternatives among those \el alternatives, all the $6$ possible permutations appear in the sampled preferences, then the algorithm outputs the profile to be random, otherwise it says that the profile is close to being single peaked.

\begin{algorithm}[!htbp]
 \caption{\randomtester[single peak]{$0$}{$\eps_a$}{$\delta$}\label{code:alt_outlier}}
  \begin{algorithmic}[1]
    \Require{Oracle access to a profile \PP}
    \Ensure{$1$ if there exists $\eps_a m$ alternatives whose deletion makes the resulting profile single peaked and $0$ if \PP has been generated randomly}
    \State{Sample $\el=\min\{(1-\eps_a)m,2\log_{\nfrac{1}{\eps_a}}\nfrac{1}{\delta}\}$ alternatives uniformly at random from \AA without replacement. Let \BB be the set of sampled alternatives.}
    \State{Sample $t=18\ln \tfrac{2\log_{\nfrac{1}{\eps_a}}\nfrac{1}{\delta}}{\delta}$ uniformly random preferences restricted to \BB. Let the sampled profile be $\QQ\in\LL(\BB)^t$}
    \For{Every distinct $a, b, c\in\BB$}
        \If{at least one permutation in $\LL(\{a,b,c\})$ is not present in \QQ}
            \Return $1$
        \EndIf
    \EndFor
    \Return $0$
  \end{algorithmic}
\end{algorithm}

\begin{theorem}\label{thm:spc_m_test}
 There exists a \randomtester[single peak]{$0$}{$\eps_a$}{$\delta$} with sample complexity $\OO(\log \tfrac{\log_{\nfrac{1}{\eps_a}}\nfrac{1}{\delta}}{\delta} \log_{\nfrac{1}{\eps_a}}\tfrac{1}{\delta}\log\log_{\nfrac{1}{\eps_a}}\nfrac{1}{\delta})$. Hence, there also exists a \worstcasetester[single peak]{$0$}{$\eps_a$}{$\delta$} with the same sample complexity for every $0<\eps_a<1$ and $0<\delta<\nfrac{1}{2}$ such that $\ccc_{\text{single peak}}((1-\eps_a)m)<1$.
\end{theorem}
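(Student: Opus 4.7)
The plan is to analyze \Cref{code:alt_outlier} via a sample-complexity bookkeeping followed by correctness in each of the two possible inputs. The ``worst-case'' version of the tester will follow at no extra cost from the observation, made immediately before the theorem, that with only alternatives as outliers the random-outlier and arbitrary-outlier problems coincide.

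For sample complexity, I would note that the algorithm elicits $t$ preferences each restricted to an $\el$-element alternative set, and learning one such restricted preference via comparison-based sorting costs $\OO(\el\log\el)$ comparison queries. Plugging in $\el=\OO(\log_{\nfrac{1}{\eps_a}}\nfrac{1}{\delta})$ and $t=\OO(\log(\el/\delta))$ recovers the claimed total sample complexity $\OO(t\,\el\log\el)$.

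For correctness in Case~1 (input is $\eps_a m$ alternatives away from single peaked), fix $\XX\subseteq\AA$ with $|\XX|\ge(1-\eps_a)m$ such that $(\suc_i(\XX))_{i\in[n]}$ is single peaked. First I would show that with probability at least $1-\delta/2$ the hypergeometric sample $\BB$ contains at least three alternatives from $\XX$: each draw lands in the ``bad'' set $\AA\setminus\XX$ with probability at most $\eps_a$, so a union bound over the $\binom{\el}{2}$ choices of which two draws might be good gives failure probability at most $\binom{\el}{2}\eps_a^{\el-2}$, which the choice $\el=\OO(\log_{\nfrac{1}{\eps_a}}\nfrac{1}{\delta})$ drives below $\delta/2$; in the corner case $\el=(1-\eps_a)m$ the precondition $\ccc_{\text{single peak}}((1-\eps_a)m)<1$ (equivalently $(1-\eps_a)m\ge 3$) already forces $|\XX\cap\BB|\ge 3$ deterministically. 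Once three alternatives $a,b,c\in\XX\cap\BB$ are present, normality of the single peaked domain implies that $(\suc_i(\{a,b,c\}))_{i\in[n]}$ itself lies in the single peaked domain, so at least one permutation of $\{a,b,c\}$ is absent from the input profile and hence from $\QQ$, causing the algorithm to output $1$.

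For correctness in Case~2 (input is uniformly random), for any fixed triple $\{a,b,c\}\subseteq\BB$ the restriction of a uniform preference to $\{a,b,c\}$ is uniform in $\LL(\{a,b,c\})$, so a specific permutation is absent from $\QQ$ with probability $(5/6)^t$. A union bound over the at most $\binom{\el}{3}$ triples and the six permutations yields failure probability at most $6\binom{\el}{3}(5/6)^t\le\delta/2$ under the choice $t=\OO(\log(\el/\delta))$; then every permutation of every triple appears in $\QQ$ and the algorithm outputs $0$. The step I expect to be the most delicate is the Case~1 tail bound that three good alternatives fall into $\BB$: $\el$ must be chosen just large enough to push the miss probability below $\delta/2$ while staying $\OO(\log_{\nfrac{1}{\eps_a}}\nfrac{1}{\delta})$ so as not to inflate the per-preference elicitation cost; the remaining arguments are routine Chernoff/union-bound estimates.
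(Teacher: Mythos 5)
Your proposal follows essentially the same route as the paper's proof: the same algorithm (\Cref{code:alt_outlier}), the same $\OO(t\,\el\log\el)$ accounting, the same Case~1 argument that three sampled alternatives inside the good set $\XX$ force a missing permutation by normality, and the same Case~2 union bound over triples (your direct $6\binom{\el}{3}(5/6)^t$ bound is just a more elementary stand-in for the paper's coupon-collector tail $\binom{\el}{3}e^{-t/6}$, with identical asymptotics).

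The one step that does not hold as stated is your treatment of the corner case $\el=(1-\eps_a)m$. Inclusion--exclusion only gives $|\XX\cap\BB|\ge|\XX|+|\BB|-|\AA|\ge(1-2\eps_a)m$, which is not $\ge 3$ once $\eps_a\ge\nfrac{1}{2}$ (more precisely once $(1-2\eps_a)m<3$), even though the precondition $\ccc_{\text{single peak}}((1-\eps_a)m)<1$ can still hold; e.g.\ $m=12$, $\eps_a=\nfrac{1}{2}$ gives $\el=6$ in the corner regime for small $\delta$, and a hypergeometric draw of $6$ alternatives from $6$ good and $6$ bad misses three good ones with constant probability, far above $\delta$. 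So the claimed deterministic guarantee is false, and in that regime your probabilistic bound $\binom{\el}{2}\eps_a^{\el-2}\le\nfrac{\delta}{2}$ is also unavailable because $\el$ is then smaller than $2\log_{\nfrac{1}{\eps_a}}\nfrac{1}{\delta}$. To be fair, the paper's own proof is equally silent here: its bound $\eps_a^\el+\el\eps_a^{\el-1}+\el^2\eps_a^{\el-2}\le\delta$ is only justified when $\el=2\log_{\nfrac{1}{\eps_a}}\nfrac{1}{\delta}$, so this is a shared looseness rather than a divergence of approach; but if you want your write-up to be airtight you should either restrict the corner case to $(1-2\eps_a)m\ge 3$ or replace the deterministic claim with an explicit hypergeometric argument.
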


\begin{proof}
 \longversion{We present our algorithm in \Cref{code:alt_outlier}. }We sample $\el=\min\{(1-\eps_a)m,2\log_{\nfrac{1}{\eps_a}}\nfrac{1}{\delta}\}$ alternatives uniformly at random without replacement. Let \BB be the set of sampled alternatives. We now sample $t=18\ln \tfrac{2\log_{\nfrac{1}{\eps_a}}\nfrac{1}{\delta}}{\delta}$ preferences uniformly at random with replacement restricted to \BB. Let $\QQ$ be the set of sampled preferences. We output $1$ if there exist $3$ alternatives $a,b,c\in\BB$ such that at least one permutation in $\LL(\{a,b,c\})$ is not present in \QQ and output $0$ otherwise. We observe that the sample complexity of our algorithm is $\OO(t\el\log\el)=\OO(\log \tfrac{\log_{\nfrac{1}{\eps_a}} \nfrac{1}{\delta}}{\delta} \log_{\nfrac{1}{\eps_a}}\tfrac{1}{\delta}\log\log_{\nfrac{1}{\eps_a}}\nfrac{1}{\delta})$. We now turn to the correctness of our algorithm. For that we show that irrespective of the input profile, the probability of making an error is at most $\delta$. 
 \begin{itemize}[leftmargin=0cm,itemindent=0.3cm,labelwidth=\itemindent,labelsep=0cm,align=left,noitemsep,topsep=2pt]
  \item {\bf Case I - the input profile is single peaked after deleting at most $\eps_a m$ alternatives:} Let \AA be the set of alternatives and $\WW\subset\AA$ with $|\WW|\le \eps_a m$ such that the input profile restricted to $(\AA\setminus\WW)$ is single peaked. Then we have the following for the chosen value of \el:
  \begin{align*}
   \Pr[\text{error}] &\le \Pr[|\BB\cap\WW|\ge\el-2]\\
   &= \eps_a^\el + {\el\choose 1}(1-\eps_a)\eps_a^{\el-1} + {\el\choose 2}(1-\eps_a)^2\eps_a^{\el-2}\\
   &\le \eps_a^\el + \el\eps_a^{\el-1} + \el^2\eps_a^{\el-2}\le \delta
  \end{align*}
  
  \item {\bf Case II - the input profile has been generated uniformly at random:} For any $3$ alternatives $a,b,c\in\AA$, we define a random variable $X_{\{a,b,c\}}$ to be $1$ if all $6$ possible permutations in $\LL(\{a,b,c\})$ are present in $\QQ(\{a,b,c\})$ and $0$ otherwise. Using folklore tail bound for the {\em coupon collector problem} (for example, see \cite[Chap 3.6]{motwani2010randomized}), we obtain the following for the chosen value of $t$.
  \begin{align*}
   \Pr[X_{\{a,b,c\}}=0] \le 6^{-\nfrac{t}{3\ln 6}} \le e^{-\nfrac{t}{6}}
  \end{align*}

  Now using union bound, we obtain the following for the chosen values of \el and $t$.
  \begin{align*}
   \Pr[\text{error}] &\le \Pr[\exists\{a,b,c\}\subset\BB, X_{\{a,b,c\}}=0]\le {\el\choose 3} e^{-\tfrac{t}{6}}\le \delta\qedhere
  \end{align*}
 \end{itemize}
\end{proof}

From the proof of \Cref{thm:spc_m_test}, \Cref{cor:gen_m_test} follows.

\begin{corollary}\label{cor:gen_m_test}
 For every domain \DD, there exists a \worstcasetester{$0$}{$\eps_a$}{$\delta$} with sample complexity $\OO(\log \tfrac{\log_{\nfrac{1}{\eps_a}}\nfrac{1}{\delta}}{\delta} \log_{\nfrac{1}{\eps_a}}\tfrac{1}{\delta}\log\log_{\nfrac{1}{\eps_a}}\nfrac{1}{\delta})$ for every $0<\eps_a<1$ and $0<\delta<\nfrac{1}{2}$ such that $\ccc_{\DD}((1-\eps_a)m)<1$.
\end{corollary}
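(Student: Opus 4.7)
The plan is to adapt the algorithm and analysis of Theorem~\ref{thm:spc_m_test} by replacing the specific constant $\ccc_{\text{single peak}}(3) = 2/3$ with a domain-dependent content value whose existence is guaranteed by the hypothesis $\ccc_\DD((1-\eps_a)m) < 1$. Define $k_0 = \min\{k \in \NB : \ccc_\DD(k) < 1\}$; since $\ccc_\DD$ is non-increasing for normal domains, the hypothesis ensures $k_0 \le (1-\eps_a)m$, so $k_0$ is a well-defined domain-dependent integer that will serve as the size of the ``witness subset'' our tester examines.

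First I would take the algorithm template of Algorithm~\ref{code:alt_outlier}: sample $\el \ge k_0$ alternatives (essentially $\el = \min\{(1-\eps_a)m, 2\log_{\nfrac{1}{\eps_a}}\nfrac{1}{\delta}\}$, replaced by $\max\{k_0,\cdot\}$ if necessary) to form $\BB$, then sample $t$ preferences restricted to $\BB$. In place of scanning triples of $\BB$ and checking for all $3!=6$ permutations, the algorithm will scan every $k_0$-sized subset $S \subseteq \BB$ and output $1$ iff at least one permutation of $S$ is missing from the sampled preferences restricted to $S$.

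For Case~I, where the input profile is $\eps_a m$ alternatives away from $\DD$ via an outlier set $\WW$, the union-bound argument of Theorem~\ref{thm:spc_m_test} carries over mutatis mutandis, showing that $|\BB \cap \WW| \le \el - k_0$ with probability at least $1 - \delta$, so some $k_0$-subset $S \subseteq \BB$ lies entirely in the ``good'' set $\AA \setminus \WW$. Normality of $\DD$ then implies that the input profile restricted to $S$ is in $\DD$, and by the definition of $\ccc_\DD(k_0) < 1$ this restricted profile omits at least one permutation of $S$, so the sample omits one too, and the algorithm correctly outputs $1$. For Case~II (uniformly random profile), the coupon-collector tail bound applied to each $k_0$-subset followed by a union bound over $\binom{\el}{k_0}$ subsets, with $t$ increased to accommodate $k_0!$ coupons instead of $6$, yields the required error bound.

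The main obstacle is essentially bookkeeping: since $k_0$ depends only on $\DD$ and not on $\eps_a, \delta, m$, the increases from $3!$ to $k_0!$ and from $\el-2$ to $\el-k_0$ are all absorbed into the hidden $\OO(\cdot)$ constants, preserving the asymptotic sample complexity of Theorem~\ref{thm:spc_m_test}. The one subtle point is verifying that $\el$ is large enough to host a $k_0$-subset of good alternatives in the regime where $2\log_{\nfrac{1}{\eps_a}}\nfrac{1}{\delta}<(1-\eps_a)m$; this is handled by the $\max$ noted above, which adds only a domain-dependent additive term to the sample complexity.
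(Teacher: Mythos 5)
Your proposal is correct and follows essentially the same route as the paper, which obtains \Cref{cor:gen_m_test} directly from the proof of \Cref{thm:spc_m_test} by replacing the ``three alternatives / six permutations'' check with the smallest witness size $k_0=\min\{k:\ccc_\DD(k)<1\}$ (guaranteed to exist and satisfy $k_0\le(1-\eps_a)m$ by the hypothesis), using normality so that the restriction to a fully good $k_0$-subset stays in \DD and hence misses a permutation, and absorbing the domain-dependent blow-ups ($k_0!$ coupons, $\binom{\el}{k_0}$ subsets) into the hidden constants. The minor parameter-range looseness you inherit (very small $\el$ relative to $k_0$) is already present in the paper's own analysis, so it does not distinguish your argument from theirs.
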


We show below that the condition $\ccc_{\DD}((1-\eps_a)m)<1$ in \Cref{thm:spc_m_test,cor:gen_m_test} is necessary. We prove \Cref{prop:gen_m_imposs} by carefully constructing a set of outliers such the the sample distribution in both the possibilities are statistically indistinguishable.

\begin{proposition}\label{prop:gen_m_imposs}
 For every domain \DD, there does not exist any \worstcasetester{$0$}{$\eps_a$}{$\delta$} if $\ccc_{\DD}((1-\eps_a)m)=1$.
\end{proposition}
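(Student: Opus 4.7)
The plan is to exhibit a single distribution over input profiles that is simultaneously consistent with both possibilities in \Cref{def:worst_rand} (with $\eps_v=0$), thereby rendering the two cases statistically indistinguishable to any tester.

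First, I would unpack the hypothesis $\ccc_\DD((1-\eps_a)m)=1$. By the definition of content, there exist a set $\XX$ of $(1-\eps_a)m$ alternatives and a subset $\BB\subseteq\LL(\XX)$ with $\BB\in\DD$ and $|\BB|=((1-\eps_a)m)!$; hence $\BB=\LL(\XX)$. Invoking neutrality of $\DD$, I would transport this extremal witness to any chosen $\XX^\star\subseteq\AA$ of size $(1-\eps_a)m$, concluding that $\LL(\XX^\star)\in\DD$.

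Next, I would consider the distribution $\mu$ on $\LL(\AA)^n$ under which each $\suc_i$ is drawn independently and uniformly from $\LL(\AA)$. By construction, $\mu$ is a valid distribution of type~(ii) inputs. I would then argue it is equally a valid distribution of type~(i) inputs by taking $\WW=[n]$ and $\XX=\XX^\star$: the restricted profile $(\suc_i(\XX^\star))_{i\in[n]}$ lies in $\LL(\XX^\star)^n$, and since $\LL(\XX^\star)\in\DD$, this restricted profile belongs to $\DD$ by the very definition of profile membership in a domain.

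Finally, I would close with a standard indistinguishability argument. Any purported tester's output distribution depends only on the preferences it observes through the oracle, whose joint law is determined by the input profile. Since this profile is drawn from the same $\mu$ in both scenarios, the tester outputs $1$ with some common probability $p$ under both. Its error probability is then $1-p$ on type~(i) inputs and $p$ on type~(ii) inputs, so the worst-case error is $\max(p,1-p)\geq \tfrac{1}{2}>\delta$, contradicting the tester's guarantee. The main obstacle I anticipate is the first step: pinning down that $\ccc_\DD((1-\eps_a)m)=1$ truly forces a full $\LL(\XX^\star)\subseteq\DD$ for every size-$(1-\eps_a)m$ subset of $\AA$. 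This follows from the supremum in the definition of content being attained together with neutrality propagating the witness across alternative subsets; everything else is a routine coupling argument.
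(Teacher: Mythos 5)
Your proposal is correct and follows essentially the same route as the paper: the paper's (one-line) proof also observes that $\ccc_\DD((1-\eps_a)m)=1$ means \emph{any} profile --- in particular a uniformly random one --- already satisfies possibility~(i) after deleting $\eps_a m$ alternatives, so the two cases coincide and no tester with error at most $\delta<\nfrac{1}{2}$ can exist. Your write-up merely makes explicit the details the paper leaves implicit (the neutrality-based transport of the full-content witness to a concrete $\XX^\star$ and the $\max(p,1-p)\ge\nfrac{1}{2}$ indistinguishability step), which is fine.
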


\longversion{
\begin{proof}
 Follows immediately from the observation that any profile (in particular a profile where every preference is distributed uniformly in $\LL(\AA)$) can be made to belong to the domain \DD by deleting any $\eps_a m$ alternatives.
\end{proof}
}

We now turn to the \worstworstcasetest[single peak]{$0$}{$0$}{$0$}{$\eps_a^\pr$}{$\delta$} problem. The following results show that the sample complexity of this problem is $\Omega(n\log \nfrac{1}{\delta})$ even for the single peaked and single crossing domains.

\begin{theorem}\label{thm:sp_m_test_worst_worst}
 Any \worstworstcasetester[single peak]{$0$}{$0$}{$0$}{$\eps_a^\pr$}{$\delta$} has sample complexity $\Omega(n\log \nfrac{1}{\delta})$ for every $0<\eps_a^\pr\le 1$ and $0<\delta<\nfrac{1}{2}$ such that $\ccc_{\text{single peak}}((1-\eps_a^\pr)m)<1$.
\end{theorem}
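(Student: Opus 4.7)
The plan is to exhibit two profiles $P$ and $Q$ of $n$ preferences over $m$ alternatives that differ in only $m-1$ positions, where $P$ is a YES instance and $Q$ is a NO instance; a coupon-collector-style indistinguishability argument then yields the claimed lower bound.

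Let $\pi = a_1\suc a_2\suc\cdots\suc a_m$ and let $\pi^R$ be its reverse. For each $j\in\{2,\ldots,m-1\}$, let $\sigma^{(j)}$ denote the preference obtained from $\pi$ by moving $a_j$ to the bottom. Define $P$ to be the profile in which all $n$ voters vote $\pi$; trivially $P\in\DD$ and $P$ is a YES instance. Define $Q$ to be the profile obtained from $P$ by replacing $m-1$ copies of $\pi$ with one copy each of $\pi^R,\sigma^{(2)},\ldots,\sigma^{(m-1)}$.

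The first step is to verify that $Q$ is a NO instance, i.e., $Q|_\XX$ is not single-peaked for every $\XX\subseteq\AA$ with $|\XX|\ge(1-\eps_a^\pr)m$. The hypothesis $\ccc_{\text{single peak}}((1-\eps_a^\pr)m)<1$ forces $|\XX|\ge 3$. Writing $\XX=\{a_{i_1},\ldots,a_{i_k}\}$ with $i_1<\cdots<i_k$ and picking any $r$ with $1<r<k$, set $j=i_r\in\{2,\ldots,m-1\}$. On the triple $T=\{a_{i_1},a_j,a_{i_k}\}\subseteq\XX$, the restriction of $\pi$ ranks $a_{i_k}$ last, the restriction of $\pi^R$ ranks $a_{i_1}$ last, and the restriction of $\sigma^{(j)}$ ranks $a_j$ last (since $a_j$ is globally last in $\sigma^{(j)}$). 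Each element of $T$ is therefore the bottom of some preference restricted to $T$, a ``worst-diverse'' triple, which is a classical forbidden configuration for single-peakedness. Hence $Q|_\XX$ is not single-peaked.

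The second step is the indistinguishability argument. Any algorithm issuing $s$ comparison queries samples at most $s$ voters from the oracle. Let $A$ be the event that every sampled voter is one of the $n-m+1$ copies of $\pi$; then $\Pr[A\mid P]=1$ and $\Pr[A\mid Q]\ge(1-(m-1)/n)^s$. Conditioned on $A$, the responses to every comparison query are identical under $P$ and $Q$, so the algorithm's output distribution conditional on $A$ is the same under both. Writing $p=\Pr[\text{output}=1\mid A]$, correctness on $P$ forces $p\ge 1-\delta$, while correctness on $Q$ forces $p\cdot(1-(m-1)/n)^s\le\delta$. Combining gives $(1-(m-1)/n)^s\le\delta/(1-\delta)$, which rearranges to $s=\Omega\bigl(\tfrac{n}{m-1}\log\tfrac{1}{\delta}\bigr)$, i.e., $\Omega(n\log(1/\delta))$ for every fixed $m$ and $\eps_a^\pr$ satisfying the hypothesis.

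The main obstacle is constructing a NO instance $Q$ that differs from the YES instance $P$ in only $O(m)$ preferences: a naive attempt that alters one or two preferences would leave a profile that is still single-peaked on every restriction, because a 2-preference profile is always single-peaked on any alternative set. The construction above surmounts this by including one $\sigma^{(j)}$ for each possible middle index $j\in\{2,\ldots,m-1\}$ together with $\pi^R$, guaranteeing that no matter which $\eps_a^\pr m$ alternatives an adversary deletes, some middle index of the surviving set still supplies a bad preference that provides the ``middle worst'' needed to complete a worst-diverse triple with $\pi$ and $\pi^R$.
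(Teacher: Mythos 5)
Your overall strategy (a YES/NO pair that differ in few voters, non-single-peakedness certified by a ``worst-diverse'' triple surviving every allowed alternative deletion, and a hitting argument showing the tester must sample a modified voter) is exactly the spirit of the paper's proof, and your indistinguishability step is sound --- in fact it is spelled out more carefully than in the paper. The gap is quantitative, and you flag it yourself: because your YES instance is the unanimous profile, you are forced to modify $m-1$ voters, so the hitting argument only yields $s=\Omega\bigl(\tfrac{n}{m-1}\log\tfrac{1}{\delta}\bigr)$. That matches the claimed $\Omega(n\log\tfrac{1}{\delta})$ only when $m$ is a constant, whereas the theorem is stated (and proved in the paper) with no restriction on $m$; for $m$ growing with $n$ (say $m=\sqrt{n}$) your construction gives a strictly weaker bound, so as written the proposal does not establish the theorem.

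The paper's construction avoids this by not starting from a unanimous profile. It takes $3m$ alternatives grouped into blocks $\{a_j,b_j,c_j\}$, and a YES instance consisting of $\nfrac{n}{2}$ copies of $a_1\suc b_1\suc c_1\suc\cdots\suc a_m\suc b_m\suc c_m$ and $\nfrac{n}{2}$ copies of its reverse (this two-order profile is single peaked). The base profile already supplies, in every block, two of the three ``ranked last'' witnesses ($c_j$ and $a_j$), so replacing a \emph{single} voter by $a_1\suc c_1\suc b_1\suc\cdots\suc a_m\suc c_m\suc b_m$ makes every block a worst-diverse triple; since any deletion of fewer than $m=\nfrac{1}{3}\cdot(3m)$ alternatives leaves some block intact, the modified profile is a NO instance, and the two profiles differ in exactly one voter, giving $\Omega(n\log\tfrac{1}{\delta})$ uniformly in $m$ (small $\eps_a^\pr$ is then handled by padding with dummy alternatives). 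Your argument can be repaired along the same lines: replace your unanimous $P$ by such a two-typed single-peaked base profile so that a single outlier preference suffices, and the rest of your proof goes through verbatim with $m-1$ replaced by $1$.
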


\longversion{
\begin{proof}
 Let us consider the profile $\suc=(\suc_i)_{i\in[n]}$ on the set $\AA=\{a_j, b_j, c_j:j\in[m]\}$ of alternatives defined as follows for any even integer $n$.
 \begin{align*}
  i\in [\nfrac{n}{2}], \suc_i &: a_1\suc b_1\suc c_1\suc \longversion{\cdots a_j\suc b_j\suc c_j\suc} \cdots a_m\suc b_m\suc c_m\\
  i\in \left[\frac{n}{2}+1,n-1\right], \suc_i &: c_m\suc b_m\suc a_m \suc \longversion{\cdots c_j\suc b_j\suc a_j\suc}\cdots c_1\suc b_1\suc a_1\\
  \suc_n &: a_1\suc c_1\suc b_1\suc \longversion{\cdots a_j\suc c_j\suc b_j\suc} \cdots a_m\suc c_m\suc b_m\\
 \end{align*}
 
 We observe that at least one alternative in $\{a_i, b_i, c_i\}$ for every $i\in[m]$ needs to be deleted to make the resulting profile single peaked. Hence the alternative-distance of \suc from single peaked domain is $m$. Now let us consider another profile $\suc^\pr=(\suc_i^\pr)_{i\in[n]}$ on \AA which is defined as follows.
 \[\suc_i^\pr=\suc_i \forall i\in[n-1] \text{ and } \suc_n^\pr=\suc_1\]
 
 The profile $\suc^\pr$ is single peaked with respect to the societal order $a_1\suc b_1\suc c_1\suc \longversion{\cdots a_j\suc b_j\suc c_j\suc} \cdots a_m\suc b_m\suc c_m$ of the set of alternatives. Now any \worstworstcasetester{$0$}{$0$}{$0$}{$\nfrac{1}{3}$}{$\delta$} would be able to distinguish \suc from $\suc^\pr$ with probability at least $1-\delta$ which needs $\Omega(n\log \nfrac{1}{\delta})$ queries. This proves the statement for every $\eps_a^\pr\ge \nfrac{1}{3}$. To prove the result for any $0<\eps_a^\pr<\nfrac{1}{3}$, we add $\nfrac{m}{\eps_a^\pr}$ dummy alternatives at the bottom of every preference in the same order.
\end{proof}
}

\longversion{ We have similar lower bound for the single crossing domain too in \Cref{thm:sc_m_test_worst_worst}.}

\begin{theorem}\label{thm:sc_m_test_worst_worst}
 Any \worstworstcasetester{$0$}{$0$}{$0$}{$\eps_a^\pr$}{$\delta$} has sample complexity $\Omega(n\log \nfrac{1}{\delta})$ for single crossing domain for every $0<\eps_a^\pr\le 1$ and $0<\delta<\nfrac{1}{2}$ such that $\ccc_{\text{single crossing}}((1-\eps_a^\pr)m)<1$.
\end{theorem}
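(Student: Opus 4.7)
The plan is to adapt the construction of \Cref{thm:sp_m_test_worst_worst} to single crossing, using a ``cyclic triple'' on each group of three alternatives as the local obstruction instead of the single peaked obstruction. As before, I would exhibit two profiles $\suc$ and $\suc^\pr$ over a common alternative set $\AA=\{a_j,b_j,c_j:j\in[m]\}$ of size $3m$ that differ in exactly one preference, such that $\suc^\pr$ is single crossing while $\suc$ has alternative-distance at least $m$ (i.e., $\eps_a^\pr \ge \nfrac{1}{3}$) from the single crossing domain. The $\Omega(n\log\nfrac{1}{\delta})$ lower bound on the number of queries then follows exactly as in \Cref{thm:sp_m_test_worst_worst} via the standard indistinguishability argument for profiles that differ in a single voter, and the general case $0<\eps_a^\pr<\nfrac{1}{3}$ is handled by padding with $m(\nfrac{1}{\eps_a^\pr}-3)$ dummy alternatives placed at the bottom of every preference in a common order (this preserves single crossingness and does not help reduce the alternative-distance).

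The construction. For an even integer $n$, I would take $\nfrac{n}{2}$ copies of
\[
\suc_1 : a_1\suc b_1\suc c_1\suc a_2\suc b_2\suc c_2\suc\cdots\suc a_m\suc b_m\suc c_m,
\]
$(\nfrac{n}{2})-1$ copies of
\[
\suc_2 : b_1\suc c_1\suc a_1\suc b_2\suc c_2\suc a_2\suc\cdots\suc b_m\suc c_m\suc a_m,
\]
and a single copy of
\[
\suc_3 : c_1\suc a_1\suc b_1\suc c_2\suc a_2\suc b_2\suc\cdots\suc c_m\suc a_m\suc b_m,
\]
to form $\suc$; the profile $\suc^\pr$ differs only in that the last voter carries $\suc_1$ instead of $\suc_3$. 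The three preferences cycle through the local orders $a_jb_jc_j,\;b_jc_ja_j,\;c_ja_jb_j$ on each group $\{a_j,b_j,c_j\}$, while all three agree on the ``block'' ordering, so that outside of the within-group triples the profile looks uniform.

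Correctness has two parts. First, $\suc^\pr$ contains only the two preferences $\suc_1$ and $\suc_2$, and the voter ordering that places all $\suc_1$-voters before all $\suc_2$-voters is single crossing (any pair of alternatives either is ranked identically by both or crosses exactly once between the two blocks). Second, I need the key claim that $\suc$ has alternative-distance at least $m$ from single crossing. Since the domain is normal, the restriction of any single crossing sub-profile of $\suc$ to any subset of alternatives is still single crossing; but on any full triple $\{a_j,b_j,c_j\}$ the three distinct preferences $a_jb_jc_j,\,b_jc_ja_j,\,c_ja_jb_j$ are present, and a direct pairwise analysis shows that no linear ordering of three voters can make all three pairs of alternatives have contiguous preferers (each of the three voters would need to be isolated at an endpoint, which is impossible). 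Hence at least one alternative from each of the $m$ disjoint groups must be removed, giving alternative-distance exactly $m$, i.e., a $\nfrac{1}{3}$-fraction of $|\AA|=3m$. This step is the main obstacle and is the only place where the proof meaningfully differs from the single peaked case; once it is in hand, the remainder is mechanical.
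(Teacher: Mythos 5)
Your proposal is correct and follows essentially the same route as the paper's proof: two profiles over $\{a_j,b_j,c_j:j\in[m]\}$ differing in a single voter, one single crossing and one forcing at least one alternative deletion per triple, then the standard single-voter indistinguishability bound of $\Omega(n\log\nfrac{1}{\delta})$ and dummy-alternative padding for the remaining range of $\eps_a^\pr$. The only deviation is the local gadget: the paper builds its single crossing base from four orders ($a_jb_jc_j$, $a_jc_jb_j$, $c_ja_jb_j$, $c_jb_ja_j$) and uses $b_ja_jc_j$ as the odd voter, whereas you use a two-order base with the Condorcet-cycle triple as the obstruction, which is a simpler but equally valid choice (your endpoint argument for why the cyclic triple is not single crossing is exactly right).
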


\longversion{
\begin{proof}
 Let us consider the profile $\suc=(\suc_i)_{i\in[n]}$ on the set $\AA=\{a_j, b_j, c_j:j\in[m]\}$ of alternatives defined as follows for any integer $n$ divisible by $4$.
 \begin{align*}
  i\in[1,\nfrac{n}{4}], \suc_i &: a_1\suc b_1\suc c_1\suc \longversion{\cdots a_j\suc b_j\suc c_j\suc} \cdots a_m\suc b_m\suc c_m\\
  i\in [\nfrac{n}{4}+1,\nfrac{n}{2}], \suc_i &: a_1\suc c_1\suc b_1\suc \longversion{\cdots a_j\suc c_j\suc b_j\suc} \cdots a_m\suc c_m\suc b_m\\  
  i\in [\nfrac{n}{2}+1,\nfrac{3n}{4}], \suc_i &: c_1\suc a_1\suc b_1\suc \longversion{\cdots c_j\suc a_j\suc b_j\suc} \cdots c_m\suc a_m\suc b_m\\  
  i\in [\nfrac{3n}{4}+1,n-1], \suc_i &: c_1\suc b_1\suc a_1\suc \longversion{\cdots c_j\suc b_j\suc a_j\suc} \cdots c_m\suc b_m\suc a_m\\  
  \suc_n &: b_1\suc a_1\suc c_1\suc \longversion{\cdots b_j\suc a_j\suc c_j\suc} \cdots b_m\suc a_m\suc c_m\\
 \end{align*}
 
 We observe that at least one alternative in $\{a_i, b_i, c_i\}$ for every $i\in[m]$ needs to be deleted to make the resulting profile single crossing. Hence the alternative-distance of \suc from single crossing domain is $m$. Now let us consider another profile $\suc^\pr=(\suc_i^\pr)_{i\in[n]}$ on \AA which is defined as follows.
 \[\suc_i^\pr=\suc_i \forall i\in[n-1] \text{ and } \suc_n^\pr=\suc_1\]
 
 The profile $\suc^\pr$ is single crossing with respect to the preference ordering $(\suc_i)_{i\in[n-1]}$. Now any \worstworstcasetester{$0$}{$0$}{$0$}{$\eps_a^\pr$}{$\delta$} would be able to distinguish \suc from $\suc^\pr$ with probability at least $1-\delta$ which needs $\Omega(n\log \nfrac{1}{\delta})$ queries. This proves the statement for every $\eps_a^\pr\ge \nfrac{1}{3}$. To prove the result for any $0<\eps_a^\pr<\nfrac{1}{3}$, we add $\nfrac{m}{\eps_a^\pr}$ dummy alternatives at the bottom of every preference in the same order.
\end{proof}
}

\longversion{
\subsection{Both Preferences and Alternatives as Outliers}

In this section, we consider the case when both preferences and alternatives can simultaneously be considered as outliers.
\begin{theorem}\label{thm:spc_mn_test}
 There exists a \randomtester[single peak]{$\eps_v$}{$\eps_a$}{$\delta$} with sample complexity $\OO(\tfrac{1}{(1-\eps_v)^2}\log \tfrac{\log_{\nfrac{1}{\eps_a}}\nfrac{1}{\delta}}{\delta} \log_{\nfrac{1}{\eps_a}}\tfrac{1}{\delta}\log\log_{\nfrac{1}{\eps_a}}\nfrac{1}{\delta})$ for every $0\le\eps_v<1$, $0<\eps_a<1$, and $0<\delta<\nfrac{1}{2}$ such that $\ccc_{\text{single peak}}((1-\eps_a)m)<1$.
\end{theorem}

\longversion{
\begin{proof}
 We sample $\el=\min\{(1-\eps_a)m,2\log_{\nfrac{1}{\eps_a}}\nfrac{1}{\delta}\}$ alternatives uniformly at random with replacement. Let \BB be the sampled set of alternatives. We now sample $t=\tfrac{144}{(1-\eps_v)^2}\ln \nfrac{\el}{\delta}$ preferences uniformly at random with replacement restricted to \BB. Let $\QQ$ be the set of sampled preferences. For any $3$ distinct alternatives $a,b,c\in\BB$, let $X_{a> b> c} = |\{{\suc}\in\QQ: a\suc b\suc c\}|$; that is $X_{a> b> c}$ is the random variable denoting the number of preferences in \QQ where $a$ is preferred over $b$ and $b$ is preferred over $c$. Let us define another random variable $Y$ as $Y = \min_{a,b,c\in\BB} X_{a>b>c}$. We output $1$ if $Y < \tfrac{t}{12}(1+\eps)$ and output $0$ otherwise. We observe that the sample complexity of our algorithm is $\OO(t\el\log\el)=\OO(\tfrac{1}{(1-\eps_v)^2}\log \tfrac{\log_{\nfrac{1}{\eps_a}}\nfrac{1}{\delta}}{\delta} \log_{\nfrac{1}{\eps_a}}\tfrac{1}{\delta}\log\log_{\nfrac{1}{\eps_a}}\nfrac{1}{\delta})$. We now turn to the correctness of our algorithm. For that we show that irrespective of the input profile, the probability of making an error is at most $\delta$.
 \begin{itemize}[leftmargin=0cm,itemindent=0.3cm,labelwidth=\itemindent,labelsep=0cm,align=left,noitemsep,topsep=2pt]
  \item {\bf Case I - the input profile is single peaked after deleting at most $\eps_a m$ alternatives and $\eps_v n$ preferences (which are generated randomly from the set of all preferences over the rest of the alternatives):} Let \AA be the set of alternatives and $\WW\subset\AA$ a subset of \AA with $|\WW|\le \eps_a m$ such that the input profile restricted to $(\AA\setminus\WW)$ is single peaked after deleting at most $\eps_v n$ preferences. Using the calculation in \Cref{thm:spc_m_test}, we have $\Pr[|\BB\cap\WW|\ge\el-2]\le\nfrac{\delta}{2}$. Since at least $(1-\eps_v)n$ preferences form a single peaked profile and the rest (at most $\eps_v n$) of the preferences have been chosen uniformly at random with replacement from $\LL(\AA\setminus\WW)$, there exists distinct alternatives $a,b,c\in\AA\setminus\WW$ and ${\suc}\in\LL(\{a,b,c\})$ such that $\EB[X_\suc]=\nfrac{\eps_v t}{6}$. Using Chernoff bound, we have $\Pr[X_i \ge \tfrac{t}{12}(1+\eps_v)] \le \exp\{-\tfrac{(1-\eps_v)^2}{\eps_v}\tfrac{t}{144}\} \le \nfrac{\delta}{2}$. Using union bound, we now bound the probability of error in this case as follows.
  \[\Pr[\text{error}] \le \Pr[|\BB\cap\WW|\ge\el-2] + \Pr[X_i \ge \tfrac{t}{12}(1+\eps_v)]\le\delta\]
  
  \item {\bf Case II - the input profile has been generated uniformly at random:} For every $a,b,c\in\BB$, we have $\EB[X_{a>b>c}]=\nfrac{t}{6}$. Using Chernoff bound followed by union bound, we have the following:
  \begin{align*}
   \Pr[\text{error}] &= \Pr[\exists a,b,c\in\BB, X_{a>b>c}\le \tfrac{t}{12}(1+\eps_v)]\\
   &\le 6{\el\choose 3}\exp\{-\nfrac{(1-\eps_v)^2 t}{48}\}\\
   &\le \delta
  \end{align*}

 \end{itemize}
\end{proof}
}

From the proof of \Cref{thm:spc_mn_test}, the following more general result is immediate using similar argument as in \Cref{cor:gen_n_test}\longversion{ and \Cref{cor:gen_m_test}}.

\begin{corollary}\label{cor:gen_mn_test}
 For every domain \DD, there exists a \randomtester{$\eps_v$}{$\eps_a$}{$\delta$} with sample complexity $\OO(\tfrac{1}{(1-\eps_v)^2}\log \tfrac{\log_{\nfrac{1}{\eps_a}}\nfrac{1}{\delta}}{\delta} \log_{\nfrac{1}{\eps_a}}\tfrac{1}{\delta}\log\log_{\nfrac{1}{\eps_a}}\nfrac{1}{\delta})$ for every $0\le\eps_v<1$, $0<\eps_a<1$, and $0<\delta<\nfrac{1}{2}$ such that $\ccc_{\DD}((1-\eps_a)m)<1$ such that $\ccc_{\DD}((1-\eps_a)m)<1$.
\end{corollary}

From the proof of \Cref{thm:spc_mn_test} and \Cref{thm:spc_n_test_worst}, we derive the following corollary.

\begin{corollary}\label{cor:spc_mn_test_worst}
 There exists a \worstcasetester[single peak]{$\eps_v$}{$\eps_a$}{$\delta$} with sample complexity $\OO(\tfrac{1}{(1-3\eps_v)^2}\log \tfrac{\log_{\nfrac{1}{\eps_a}}\nfrac{1}{\delta}}{\delta} \log_{\nfrac{1}{\eps_a}}\tfrac{1}{\delta}\log\log_{\nfrac{1}{\eps_a}}\nfrac{1}{\delta})$ for every $0\le\eps_v<\nfrac{1}{3}$ such that $\ccc_{\text{single peak}}((1-\eps_a)m)<1$.
\end{corollary}

\longversion{
\begin{proof}
 Our algorithm for this case is similar to Theorem 8 except the following. We use $t=\tfrac{144}{(1-3\eps_v)^2}\ln \nfrac{\el}{\delta}$; the value of \el remains same. We output $1$ if $Y < \tfrac{\el}{12}(1+3\eps_v)$ and output $0$ otherwise (with notations as defined in Theorem 1). The proof of correctness and sample complexity of our algorithm is similar to the correctness of Theorem 1 using the observation that, when the input profile can be made single peaked by deleting at most $\eps_v n$ preferences and $\eps_a m$ alternatives, there exists an $i\in[6]$ such that $\EB[X_i]=\nfrac{\eps_v t}{2}$.
\end{proof}
}

The following generalization of \Cref{cor:spc_mn_test_worst} is also immediate.

\begin{corollary}\label{cor:gen_mn_test_worst}
 Let $m_0=\min\{m\in\NB: \rrr_\DD(m)<1\}$. Then there exists a \worstcasetester{$\eps_v$}{$\eps_a$}{$\delta$} with sample complexity $\OO(\tfrac{1}{(1-(\nfrac{\eps_v}{\rrr_\DD(m_0)}))^2}\log \tfrac{\log_{\nfrac{1}{\eps_a}}\nfrac{1}{\delta}}{\delta} \log_{\nfrac{1}{\eps_a}}\tfrac{1}{\delta}\log\log_{\nfrac{1}{\eps_a}}\nfrac{1}{\delta})$ for every $\eps_v<\rrr_\DD(m_0)$ such that $\ccc_{\DD}((1-\eps_a)m)<1$.
\end{corollary}
}

\section{Empirical evaluation}

The algorithms presented in \Cref{sec:results} provide upper bounds on the sample complexities of the problems of outlier detection. These algorithms distinguish between two possibilities of profile generation with a probability of correctness of at least $(1-\delta)$. 
It is interesting to find out the optimal multiplying factors of the sampling complexities inside $\OO(\cdot)$\longversion{ that drive the classification probabilities beyond the threshold of $(1-\delta)$} in these algorithms. This is why an empirical evaluation is called for. 

In this section, we empirically find the factors for the results of \Cref{thm:spc_n_test,thm:spc_n_test_worst,thm:spc_m_test}, which provide constant time algorithms for the testing problem. The other two cases as shown in \Cref{tbl:summary} either consider an exponential time (\Cref{cor:spc_n_test_worst_worst}) algorithm or provide a lower bound\longversion{ that is linear in $n$} (\Cref{thm:sp_m_test_worst_worst}), which are unsuitable for an empirical study.

\subsection{Approach for \Cref{thm:spc_n_test,thm:spc_n_test_worst}:} We generate $n=10,000$ preferences with $m$ alternatives uniformly at random to form a preference profile. The sampling algorithm of \Cref{thm:spc_n_test} (given by \Cref{code:pref_outlier}) picks an $\ell$ for a given $\epsilon_v$. In this experiment, we choose a sampling size $l$ that is smaller than $\ell$, and apply the same algorithm using $l$ preferences sampled with replacement from the population of $n$. We generate the preference profile $100$ times and for every profile, sample $l$ preferences $100$ times. We consider the fraction of correct classifications given by this modified sampling algorithm and plot it with increasing $l$. We fix $\delta=0.001$ for these evaluations. We show the plot of the fraction of correct classification (denoted by $\rho$) for \Cref{thm:spc_n_test} with $m=3$ in \Cref{thm1_profile2_normalized}. 
%
%
%
\begin{figure}[h!]
\centering
 \includegraphics[width=1.1\linewidth]{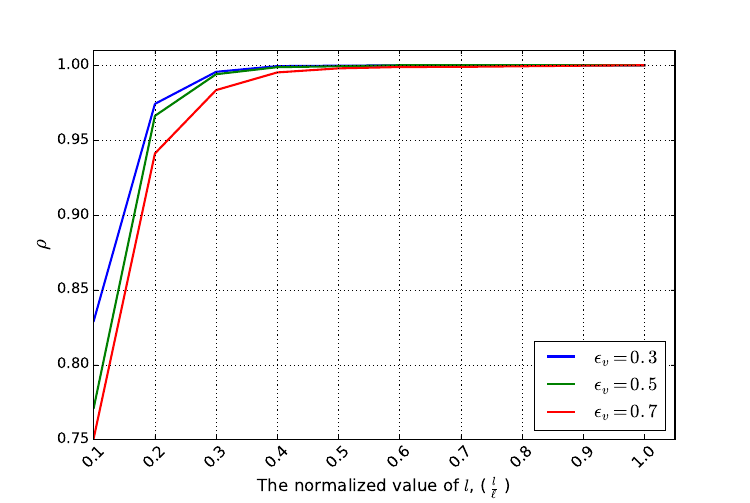}
 \caption{Fraction of correct classification ($\rho$) of the adaptation of \Cref{code:pref_outlier} when $l (\le\ell)$ preferences have been sampled 
 uniformly at random from a random preference profile of size $n=10,000$, $\delta = 0.001$ 
 (x-axis shows the normalized value, $\nfrac{l}{\ell}$).}
 \label{thm1_profile2_normalized}
\end{figure}
%
The plot shows the growth of the empirical probability of correctness (and therefore does not need any errorbar). 
The x-axis shows the normalized sample size (that is $\nfrac{l}{\ell}$). Notice that the growth of the curves almost overlaps for different $\epsilon_v$s, 
and reaches $(1-\delta)$ nearly at $0.5$. This empirically shows that when other parameters are held fixed at the chosen values, the hidden constant in the upper bound of the sample complexity in the context of random outliers can be reduced by almost 50\%, and is independent of $\epsilon_v$. 


We perform a similar exercise with different sampling sizes for the algorithm in the proof of \Cref{thm:spc_n_test_worst} (given by \Cref{code:pref_outlier_arb}) with $m=5$ in \Cref{thm2_profile2_normalized}. Here too, the proportionality factor is independent of the $\epsilon_v$s, and the hidden constant factor in this case can be reduced by 60\%.
\begin{figure}[h!]
\centering
 \includegraphics[width=1.1\linewidth]{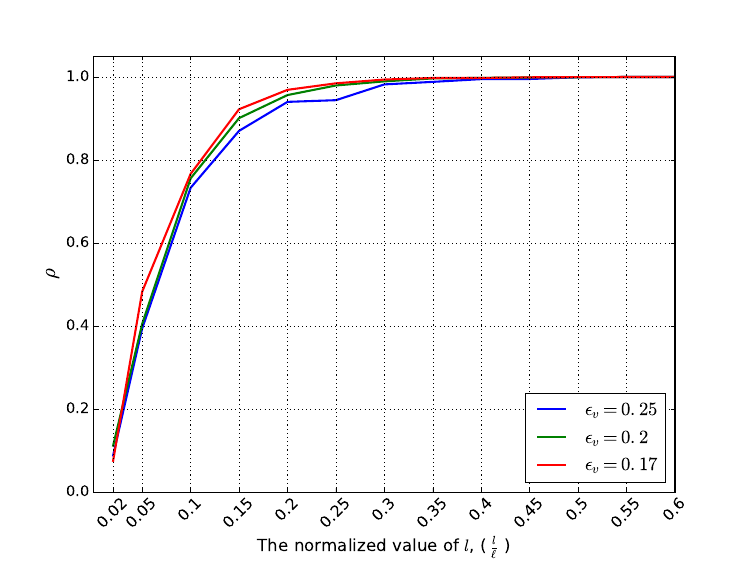}
 \caption{Fraction of correct classification ($\rho$) of the adaptation of \Cref{code:pref_outlier_arb} when $l (\leqslant \ell)$ preferences have been sampled 
 uniformly at random from a random preference profile of size $n=10,000$, $\delta = 0.001$ 
 (x-axis shows the normalized value, $\nfrac{l}{\ell}$).}
 \label{thm2_profile2_normalized}
\end{figure}

\smallskip
\paragraph{\em Why the error with a random/arbitrary outliers profile being classified as a random profile is not considered?}

We argue that such an error is not very likely in the algorithms of these theorems, which is also manifested in our simulations. Therefore we omit them presenting here. For \Cref{thm:spc_n_test}, since the focus is only on the three alternatives $a,b$, and $c$, the number of random outliers will be close to $\nfrac{\epsilon_v n}{6}$ for large enough $n$. If $l$ preferences are drawn uniformly at random with replacement from this profile, it is very likely that $\min_{i} X_i$ will be at most close to $\nfrac{\epsilon_v l}{6}$ for reasonably sized $l$. The algorithm classifies the profile as random outlier profile if $\min_{i} X_i \leqslant \nfrac{l (1+\epsilon_v)}{12}$ and since $\nfrac{\epsilon_v l}{6}\leqslant \nfrac{l (1+\epsilon_v)}{12}$, it is unlikely that a random outlier profile will be classified as random profile under this algorithm. Similar observation is true for \Cref{thm:spc_n_test_worst}.


\subsection{Approach for \Cref{thm:spc_m_test}:} Here we consider the alternatives as outliers. The algorithm in the proof of this theorem (given by \Cref{code:alt_outlier}) samples $\ell$ alternatives uniformly at random and samples $t$ preferences restricted to the sampled alternatives uniformly at random. In this case, we pick the values of $\delta$ and $n$ as before. We fix $m=9$, and pick $\ell = \min\{(1-\eps_a)m,2\log_{\nfrac{1}{\eps_a}}\nfrac{1}{\delta}\}$ as given in the proof of \Cref{thm:spc_m_test}, and vary the value of $\tau (\leqslant t)$, which is the sampling size of the preferences restricted to the chosen $\ell$ alternatives.
The alternatives of size $\ell$ are sampled 100 times.
\Cref{thm5_profile2_normalized} shows the plot of the fraction of correct classification ($\rho$) under this setting. It empirically shows that when other parameters are held fixed at the chosen values, the hidden constant of the upper bound of the probability in the case of random alternative outliers can be reduced by almost 75\%, and is independent of $\epsilon_a$.
\begin{figure}[h!]
\centering
 \includegraphics[width=1.1\linewidth]{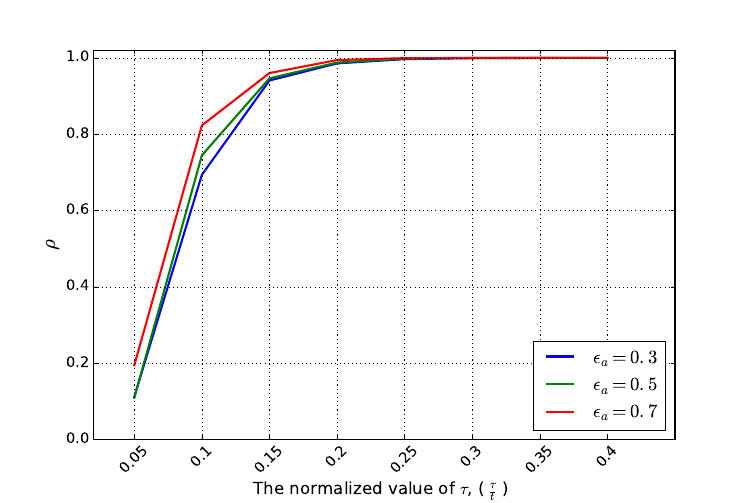}
 \caption{Fraction of correct classification ($\rho$) of the adaptation of \Cref{code:alt_outlier} when $\tau (\leqslant t)$ preferences have been sampled
 uniformly at random from a random preference profile of size $n=10,000$ with $\ell = \min\{(1-\eps_a)m,2\log_{\nfrac{1}{\eps_a}}\nfrac{1}{\delta}\}$ as given in the proof, $m=10, \delta = 0.001$ 
 (x-axis shows the normalized value, $\nfrac{\tau}{t}$).}
 \label{thm5_profile2_normalized}
\end{figure}

In a way similar to the previous paragraph we can argue that this algorithm also has a bias towards classifying a profile as random alternative outlier, which also is empirically manifested. Hence, we omit presenting them here.

\section{Discussion}

In this paper, we have developed sampling based algorithms for testing if a profile is close to some specific domain. These testing problem can be quite accurately solved by observing a small number of samples for most of the cases, and the numbers are often independent to the number of preferences or alternatives. In other cases, we have proved impossibility results. Our extensive empirical study further improve the constants of the asymptotic theoretical upper bounds on the sample complexity by 50\% to 75\% depending on the problem. As a future work, there exist more sophisticated notion of distances, namely swap distance, footrule distance, maximum displacement distance, etc. where it will be interesting to extend our results to those fine grained measures of distance.


\shortversion{\bibliographystyle{ACM-Reference-Format}}  
\longversion{\bibliographystyle{alpha}}
\bibliography{references}

\end{document}